\renewcommand{\harvardand}{and}
\newtheorem{theorem}{Theorem}[section]
\newtheorem{cor}[theorem]{Corollary}
\newtheorem{lem}[theorem]{Lemma}
\newtheorem{prop}[theorem]{Proposition}
\theoremstyle{definition}
\newtheorem{defn}[theorem]{Definition}
\numberwithin{equation}{section}
\begin{document}
\title{Fast Computing for Distance Covariance}
\author{Xiaoming Huo \\
School of Industrial and Systems Engineering, \\
Georgia Institute of Technology,\\
Atlanta, GA 30332 \\
and National Science Foundation, \\
Arlington, VA 22203 \\
email: \texttt{xiaoming@isye.gatech.edu} \\
\& G\'abor J. Sz\'ekely \\
National Science Foundation \\
Arlington, VA 22203 \\
and Alfr\'ed R\'enyi Institute of Mathematics \\
email: \texttt{gszekely@nsf.gov}
}


\date{\today}
\maketitle

\newpage

\mbox{}
\begin{center}
\textbf{Author's Footnote:}
\end{center}
Dr. Xiaoming Huo is a professor in the School of Industrial and Systems Engineering at
the Georgia Institute of Technology.
He has been serving as a rotator at the National Science Foundation since August 2013.
Mailing address: 765 Ferst Dr, Atlanta, GA 30332
(email: xiaoming@isye.gatech.edu).

Dr. G\'abor J. Sz\'ekely is a program officer at the National Science Foundation.
Mailing address: 4201 Wilson Blvd, Arlington, VA 22203
(email: gszekely@nsf.gov).

\begin{abstract}
Distance covariance and distance correlation have been widely adopted in measuring dependence of a pair of random variables or random vectors.
If the computation of distance covariance and distance correlation is implemented directly accordingly to its definition then its computational complexity
is O($n^2$) which is a disadvantage compared to other faster methods.
In this paper we show that the computation of distance covariance and distance correlation
of real valued random variables can be implemented by an O($n \log n$) algorithm and this is comparable to other computationally efficient algorithms. The new
formula we derive for an unbiased estimator for squared distance covariance turns out to be a U-statistic. This fact implies some nice asymptotic properties
that were derived before via more complex methods.
We apply the fast computing algorithm to some synthetic data.
Our work will make distance correlation
applicable to a much wider class of applications.
\end{abstract}

\vspace*{.3in}

\noindent\textsc{Keywords}: {distance correlation, fast algorithm, statistical dependence}

\newpage

\section{Introduction}

Since its induction \cite{Szekely2007},
distance correlation has had many applications in, e.g.,
life science \cite{Kong2012} and
variable selection \cite{Li2012}, and has
been analyzed \cite{Szekely2012,Lyons2013},
extended \cite{Szekely2009,pdc2014} in various aspects.
If distance correlation were implemented straightforwardly from its definition, its computational
complexity can be as high as a constant times $n^2$ for a sample size $n$.
This fact has been cited for numerous times in the literature
as a disadvantage of adopting the distance correlation.
In this paper, we demonstrate that an O($n \log(n)$) algorithm for a version of the distance correlation exits.

The main idea behind the proposed algorithm is to use an idea rooted in the
the AVL tree structure \cite{AVL1962}.
The same idea has been utilized to develop fast algorithm for computing the Kendall's $\tau$ rank correlation coefficient \cite{Knight1966} \cite{Christensen2005}.
We extend it to make it suitable for our purpose.
The derivation of the fast algorithm also involves significant reformulation from the original
version of the distance correlation.
Details are presented in this paper.

In simulations, not only we demonstrate the effectiveness of the fast algorithm, but also we testify
that the advantage of using distance correlation (in comparison with other existing methods) become more
evident when the sample sizes increase.
These experiments become feasible due to the availability of the proposed fast algorithm.
In one experiment (See details in Section \ref{sec:screening}), we increased the sample
size by $100$ fold from a previously published simulation study.

The rest of this paper is organized as follows.
Section \ref{sec:reviewDistCorr} reviews the distance covariance/correlation and
its relevant properties.
In Section \ref{sec:unbiased}, we consider a reformulation of the distance
covariance, such that the new estimator is both unbiased and a U-statistic.
In Section \ref{sec:algorithm}, an algorithm with the average
complexity of $O(n \log n)$ was presented.
Extensive simulations are presented in Section \ref{sec:simulations} to
demonstrate the additional capability we obtained due to the proposed
fast algorithm.
Finally, some concluding remarks are made in Section \ref{sec:conclude}.
Detailed description of the algorithm is relegated to the Appendix,
along with most of the technical proofs.

\section{A Review of Distance Covariance}
\label{sec:reviewDistCorr}

Distance covariance and distance correlation was introduced in 2005 by one of the co-authors of this paper, G. J. Sz\'ekely, in several lectures to address
the deficiency of Pearson's correlation, namely that the Pearson's correlation can be zero for dependent variables.
In the following, we start with a definition of the distance covariance.
\begin{defn}
The
population distance covariance between random vectors $X$ and $Y$ with finite first moments is the nonnegative number $\mathcal{V}(X, Y)$ defined by
a weighted $L_2$ norm measuring the distance between the joint characteristic
function (c.f.) $\phi_{X,Y}$ of $X$ and $Y$, and the product $\phi_X \phi_Y$
of the marginal
c.f.'s of $X$ and $Y$.
If $X$ and $Y$ take values in $\mathbb{R}^p$ and $\mathbb{R}^q$, respectively,
$\mathcal{V}^2(X, Y)$ is
\begin{eqnarray*}
\mathcal{V}^2(X, Y) &=& \|\phi_{X,Y}(t,s) - \phi_X(t) \phi_Y(s)\|_w^2 \\
&:=& \int_{\mathbb{R}^{p+q}} |\phi_{X,Y}(t,s) - \phi_X(t) \phi_Y(s)|^2
w(t,s) dt ds,
\end{eqnarray*}
where $w(t,s) := (|t|^{1+p}_{p} |s|^{1+q}_{q})^{-1}$.
The integral exists provided that $X$ and $Y$ have finite first moments.
\end{defn}

This immediately shows that distance covariance is zero if and only if
the underlying random variables are independent. The beauty of kernel $w(t,s) := (|t|^{1+p}_{p} |s|^{1+q}_{q})^{-1}$ is that the corresponding sample statistic has the following surprisingly simple form.
Denote the pairwise distances of the $X$ observations by $a_{ij}:= \|X_i - X_j\|$ and
the pairwise distances of the $Y$ observations by $b_{ij}:= \|Y_i- Y_j\|$ for $i,j =1, \dots, n$ and denote the corresponding double centered distance matrices by
$(A_{ij})_{i,j = 1}^n$, and $(B_{ij})_{i,j = 1}^n$ where
\begin{equation}\label{eq:Aij-0}
A_{ij}=\left\{ \begin{array}{ll}
a_{ij} -\frac{1}{n}\sum^{n}_{\ell=1} a_{i\ell} -\frac{1}{n}\sum^{n}_{k=1} a_{k j} +\frac{1}{n^2} \sum^{n}_{k,\ell=1} a_{k \ell}, & i \neq j; \\
0, & i=j.
\end{array}
\right.
\end{equation}

\begin{equation}\label{eq:Bij}
B_{ij}=\left\{ \begin{array}{ll}
b_{ij} -\frac{1}{n}\sum^{n}_{\ell=1} b_{i\ell} -\frac{1}{n}\sum^{n}_{k=1} b_{k j} +\frac{1}{n^2} \sum^{n}_{k,\ell=1} b_{k \ell}, & i \neq j; \\
0, & i=j.
\end{array}
\right.
\end{equation}

It is clear that the row sums and column sums of these double centered matrices are 0. The squared sample distance covariance is the following simple formula
$$ \frac{1}{n^2} \sum_{i,j=1}^n   A_{ij} B_{ij}.$$

The corresponding squared sample variance is
$$   \mathcal{V}^2_n (X) :=  \frac{1}{n^2} \sum_{i,j=1}^n   A_{ij}^2 $$
and we can define the sample distance correlation as the standardized sample covariance:

\begin{equation}\label{e:alphan}
  \mathcal{R}^2_n({X,Y})=
 \left\{
   \begin{array}{ll}
 \frac {\mathcal{V}^2_n({X,Y})}{\sqrt {
 \mathcal{V}^2_n (X) \mathcal{V}^2_n(Y)}} \;,
   & \hbox{$ \mathcal{V}^2_n
 (X)  \mathcal{V}^2_n (Y) >0
$;} \\
     0, & \hbox{$\mathcal{V}^2_n
 (X)  \mathcal{V}^2_n (Y) =0$.}
   \end{array}
 \right.
 \end{equation}

For more details see \citeasnoun{Szekely2007} and a discussion paper \cite{Szekely2009}.
It is clear that   $\mathcal{R}_n({X,Y})$ is rigid motion invariant and scale invariant.
For recent applications of distance correlation, see e.g.,
\citeasnoun**{Li2012} and \citeasnoun{Kong2012}.

The population version of distance covariance and distance correlation can be defined without characteristic functions, see \citeasnoun{Lyons2013}. This definition is as follows.
Let $X \in \mathbb R^p$ and $Y \in \mathbb R^q$ be random variables with
finite expectations. The random distance functions are $ a(X, X'):= |X-X'|_p $ and $ b(Y,Y')=|Y- Y'|_q $.
Here the primed random variable $X'$ denotes an independent and identically distributed (i.i.d.) copy of the
variable $X$, and similarly $Y, Y'$ are i.i.d.

Introduce the real-valued function $$ m(x, F_X) = E[a(x, X)] = E|x - X| = \int |x - x'| dF_X(x'), $$ where
$F_X$ is the cumulative distribution function (cdf) of $X$,
and
$$
m(X, F_X) = \int |X - x'| dF_X(x'),
$$
which is a real-valued random
variable. For simplicity we write $m(x):=m(x, F_X)$ and $m(X):=m(X, F_X)$.

Next we define the counterpart of centered distance matrices. The centered distance function is
$$ a(x,x')
:=  a(x,x') - m(x) - m(x') + E[m(X')]. $$ For random variables we have
\begin{align*}
 A(X, X') &= a(X, X') - m(X) - m(X') + E[m(X')],
\end{align*}
where $$ E[m(X')] = E[m(X, F_X)] = \int \int |x - x'| \, dF_X(x') \, dF_X(x). $$

Similarly define the centered distance function $b(y, y')$ and the random variable $ B(Y, Y')$. Now for
$ X, X'$ i.i.d., and $Y, Y'$ i.i.d., such that $X$ and $Y$ have finite expectations, the population distance covariance
$\mathcal V(X,Y)$ is defined by
\begin{equation}\label{e:dcov2}
\mathcal V^2(X,Y):= E[ A(X,X')\,  B(Y,Y')].
\end{equation}
We have that $\mathcal V^2(X, Y)$ is always nonnegative, and equates zero if and only if $X$ and $Y$ are
independent.

It is clear by inspection that without further efforts the implementation of the sample distance covariance and the corresponding sample distance correlation requires O($n^2$) steps.
In this paper we show that for real-valued random variables $X$ and $Y$, we do not need more than O($n \log n$) steps.

\section{The unbiased version of the squared sample distance covariance: reformulation and relation to U-statistics}
\label{sec:unbiased}

In this section, a reformulation is given in Section \ref{sec:reformulation}.
We then show in Section \ref{sec:u-stat} that the newly formed unbiased estimator is a U-statistic.

\subsection{Reformulation}
\label{sec:reformulation}

We will work with the unbiased version of the squared sample distance covariance, which is published in \citeasnoun{pdc2014}. The definition is as follows.

\begin{defn}[$\mathcal{U}$-centered matrix]
Let $A = (a_{ij})$ be a symmetric, real valued $n \times n$ matrix with zero
diagonal, $n > 2$. Define the $\mathcal{U}$-centered matrix $\widetilde{A}$
as follows: the $(i,j)$-th entry of $\widetilde{A}$ is
\begin{equation}\label{eq:Aij}
\widetilde{A}_{ij}=\left\{ \begin{array}{ll}
a_{ij} -\frac{1}{n-2}\sum^{n}_{\ell=1} a_{i\ell} -\frac{1}{n-2}\sum^{n}_{k=1} a_{k j} +\frac{1}{(n-1)(n-2)} \sum^{n}_{k,\ell=1} a_{k \ell}, & i \neq j; \\
0, & i=j.
\end{array}
\right.
\end{equation}
\end{defn}
Here the ``$\mathcal{U}$-centered" is so named because as shown below, the corresponding inner
product (which will be specified in \eqref{eq:tAdottB})
defines an unbiased estimator of the squared distance covariance.
\begin{prop}
Let $(x_i, y_i), i = 1,\ldots, n$ denote a sample of observations from
the joint distribution $(X, Y)$ of random vectors $X$ and $Y$.
Let $A = (a_{ij})$ be the
Euclidean distance matrix of the sample $x_1,\ldots, x_n$ from the distribution of $X$,
and $B = (b_{ij})$ be the Euclidean distance matrix of the sample $y_1,\ldots, y_n$ from the
distribution of $Y$.
Then if $E(|X| + |Y|) < 1$, for $n > 3$, the following
\begin{equation} \label{eq:tAdottB}
(\widetilde{A} \cdot \widetilde{B}):= \frac{1}{n(n-3)} \sum_{i \neq j}
\widetilde{A}_{i j} \widetilde{B}_{i j}
\end{equation}
is an unbiased estimator of squared population distance covariance $\mathcal{V}^2(X, Y)$.
\end{prop}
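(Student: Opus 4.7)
The plan is to compute $E[(\widetilde{A}\cdot\widetilde{B})]$ directly and match it against the population formula $\mathcal{V}^2(X,Y)=E[A(X,X')B(Y,Y')]$, using the i.i.d.\ structure of the sample to index both expressions through the same small set of population invariants.

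The pivotal first step is the observation that the denominators $n-2$ and $(n-1)(n-2)$ in the definition of $\widetilde{A}$ are calibrated precisely so that every row sum $\sum_{j=1}^{n}\widetilde{A}_{ij}$ (and, by the symmetry $a_{ij}=a_{ji}$, every column sum and the total sum) vanishes. Granted this, the three ``centering'' pieces of $\widetilde{B}_{ij}$ are all orthogonal to $\widetilde{A}_{ij}$ when one sums over $i\neq j$, so
\begin{equation*}
\sum_{i\neq j}\widetilde{A}_{ij}\widetilde{B}_{ij}=\sum_{i\neq j}\widetilde{A}_{ij}\,b_{ij}.
\end{equation*}
Expanding the remaining $\widetilde{A}_{ij}$ in the same fashion, and using the shorthand $a_{i\cdot}=\sum_{\ell}a_{i\ell}$, $a_{\cdot\cdot}=\sum_{k,\ell}a_{k\ell}$ (similarly for $b$), one collects the simpler identity
\begin{equation*}
\sum_{i\neq j}\widetilde{A}_{ij}\widetilde{B}_{ij}=\sum_{i\neq j}a_{ij}b_{ij}-\frac{2}{n-2}\sum_{i}a_{i\cdot}b_{i\cdot}+\frac{a_{\cdot\cdot}b_{\cdot\cdot}}{(n-1)(n-2)}.
\end{equation*}

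Next I would take expectations and introduce the three population invariants
$\alpha_1=E[a(X,X')b(Y,Y')]$, $\alpha_2=E[a(X,X')b(Y,Y'')]$ (one shared index between the $X$-pair and the $Y$-pair), and $\alpha_3=E[a(X,X')]\,E[b(Y,Y')]$ (no shared indices). Classifying the tuples $(i,j)$ or $(i,j,k,\ell)$ appearing in each of the three sums above by the number of index coincidences turns each expectation into an explicit integer combination of $\alpha_1,\alpha_2,\alpha_3$. Substituting these and simplifying, the coefficients of $\alpha_1$, $\alpha_2$, $\alpha_3$ each turn out to carry the common factor $n(n-3)$, which cancels the $1/(n(n-3))$ normalization in \eqref{eq:tAdottB} and leaves $E[(\widetilde{A}\cdot\widetilde{B})]=\alpha_1-2\alpha_2+\alpha_3$.

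To finish, I would verify $\mathcal{V}^2(X,Y)=\alpha_1-2\alpha_2+\alpha_3$ from the side of the population: expand the product $A(X,X')B(Y,Y')$ into the sixteen terms coming from the four-summand definitions of $A$ and $B$, and use the representation $m(X)=E[a(X,X')\mid X]$ together with additional i.i.d.\ copies of $(X,Y)$ to identify each of those sixteen expectations with one of $\alpha_1,\alpha_2,\alpha_3$ according to how many coordinates among the participating random variables coincide; the signed sum collapses to $\alpha_1-2\alpha_2+\alpha_3$, completing the proof. The main obstacle is the combinatorial bookkeeping in these last two steps, but once the row-sum-zero property of the first step is in hand the awkward cross terms all collapse, and the remaining work is a methodical index count.
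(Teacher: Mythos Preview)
Your proposal is correct, and the overall strategy---reduce $\sum_{i\neq j}\widetilde{A}_{ij}\widetilde{B}_{ij}$ to the three-term expression in $a_{ij}b_{ij}$, $a_{i\cdot}b_{i\cdot}$, $a_{\cdot\cdot}b_{\cdot\cdot}$, take expectations indexed by the coincidence pattern $(\alpha_1,\alpha_2,\alpha_3)$, and match against the population expansion of $E[A(X,X')B(Y,Y')]$---is the standard one and goes through exactly as you describe.

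Two remarks on how it compares with the paper. First, the paper does not actually prove this Proposition here; it simply cites the appendix of Sz\'ekely and Rizzo (2014) for the unbiasedness. So you are supplying an argument the paper omits. Second, your intermediate identity
\[
\sum_{i\neq j}\widetilde{A}_{ij}\widetilde{B}_{ij}=\sum_{i\neq j}a_{ij}b_{ij}-\frac{2}{n-2}\sum_{i}a_{i\cdot}b_{i\cdot}+\frac{a_{\cdot\cdot}b_{\cdot\cdot}}{(n-1)(n-2)}
\]
is precisely the paper's Lemma~3.3 (after dividing by $n(n-3)$), but you reach it more cleanly: you exploit that $\widetilde{A}$ has zero row, column, and total sums to kill the three centering pieces of $\widetilde{B}$ in one stroke, then expand only $\widetilde{A}_{ij}$ against $b_{ij}$. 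The paper's proof of that lemma instead expands the full sixteen-term product $\widetilde{A}_{ij}\widetilde{B}_{ij}$ and simplifies by brute force through a sequence of auxiliary identities. Your route is shorter and makes transparent \emph{why} the particular denominators $n-2$ and $(n-1)(n-2)$ are the right calibration; the paper's route is mechanical but requires no structural insight. Either way the answer is the same, and the subsequent expectation bookkeeping you outline is routine and correct.
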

The proof of the above proposition is in the appendix of \citeasnoun{pdc2014}.

Let $\Omega_n$ denote the inner product defined in \eqref{eq:tAdottB}.
The following notations will be used.
Define the column and row sums as follows:
\begin{eqnarray*}
a_{i\cdot}  = \sum^{n}_{\ell=1} a_{i \ell}, &&
a_{\cdot j} = \sum^{n}_{k=1} a_{k j}, \\
b_{i\cdot}  = \sum^{n}_{\ell=1} b_{i \ell}, &&
b_{\cdot j} = \sum^{n}_{k=1} b_{k j}, \\
a_{\cdot \cdot} = \sum^{n}_{k,\ell=1} a_{k \ell}, && \mbox{ and }
b_{\cdot \cdot} = \sum^{n}_{k,\ell=1} b_{k \ell}.
\end{eqnarray*}

We will need the following lemma.

\begin{lem} \label{lem:Omega1}
If $\Omega_n$ is the inner product  defined in \eqref{eq:tAdottB} then
we have
\begin{equation}\label{eq:Omega_n}
\Omega_n =
\frac{1}{n(n-3)} \sum_{i \neq j} a_{ij}b_{ij}
-\frac{2}{n(n-2)(n-3)} \sum_{i=1}^n a_{i\cdot}b_{i\cdot}
+ \frac{a_{\cdot \cdot} b_{\cdot \cdot}}{n(n-1)(n-2)(n-3)}.
\end{equation}
\end{lem}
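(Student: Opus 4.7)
The plan is to perform a direct expansion of the inner product $\sum_{i\neq j}\widetilde{A}_{ij}\widetilde{B}_{ij}$ using the definition (3.1) and then collect terms. After dividing by $n(n-3)$ all that remains is algebra; the payoff is that every sum reduces to one of three basic quantities:
\[
S_1 = \sum_{i\neq j} a_{ij}b_{ij}, \qquad S_2 = \sum_{i=1}^n a_{i\cdot}b_{i\cdot}, \qquad S_3 = a_{\cdot\cdot}b_{\cdot\cdot}.
\]

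First I will set $\alpha = \tfrac{1}{n-2}$ and $\beta = \tfrac{1}{(n-1)(n-2)}$ and write, for $i\neq j$,
\[
\widetilde{A}_{ij} = a_{ij} - \alpha\,a_{i\cdot} - \alpha\,a_{\cdot j} + \beta\,a_{\cdot\cdot},
\]
and similarly for $\widetilde{B}_{ij}$. Expanding the product yields nine bilinear terms (the pure $a_{ij}b_{ij}$ term, four linear/constant cross terms of the form $a_{ij}b_{i\cdot}$, $a_{ij}b_{\cdot j}$, $a_{i\cdot}b_{ij}$, $a_{\cdot j}b_{ij}$, then four ``row/column'' products, then the four products involving a $b_{\cdot\cdot}$ or $a_{\cdot\cdot}$ factor, and finally the $a_{\cdot\cdot}b_{\cdot\cdot}$ term). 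The symmetry of the distance matrices ($a_{\cdot i}=a_{i\cdot}$, $b_{\cdot i}=b_{i\cdot}$) and the zero-diagonal property ($a_{ii}=0$, so $\sum_{j\neq i} a_{ij} = a_{i\cdot}$) will collapse each of these sums into a multiple of $S_1$, $S_2$, or $S_3$.

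The key identities I will use when summing over $i\neq j$ are
\[
\sum_{i\neq j} a_{ij}b_{i\cdot} = S_2, \qquad \sum_{i\neq j} a_{i\cdot}b_{i\cdot} = (n-1)S_2, \qquad \sum_{i\neq j} a_{i\cdot}b_{\cdot j} = S_3 - S_2,
\]
with the analogous statements for the other index patterns. Summing the nine terms and grouping by $S_1$, $S_2$, $S_3$ yields
\[
\sum_{i\neq j}\widetilde{A}_{ij}\widetilde{B}_{ij} \;=\; S_1 \;+\; 2\alpha\bigl[(n-2)\alpha - 2\bigr]\,S_2 \;+\; \bigl[\,2\beta + 2\alpha^2 - 4(n-1)\alpha\beta + n(n-1)\beta^2\,\bigr]\,S_3.
\]

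The main (and only) obstacle is checking that the coefficient of $S_3$ collapses cleanly. Substituting $\alpha=\tfrac{1}{n-2}$, $\beta=\tfrac{1}{(n-1)(n-2)}$, the $S_2$ coefficient becomes $-\tfrac{2}{n-2}$, and the $S_3$ coefficient reduces, after putting everything over the common denominator $(n-1)(n-2)^2$, to $\tfrac{2(n-2)-2(n-1)+n}{(n-1)(n-2)^2} = \tfrac{1}{(n-1)(n-2)}$. Dividing through by $n(n-3)$ then gives exactly the claimed formula \eqref{eq:Omega_n}, finishing the proof.
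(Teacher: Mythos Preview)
Your proof is correct and follows essentially the same route as the paper's: both expand the product $\widetilde{A}_{ij}\widetilde{B}_{ij}$ termwise, invoke the same symmetry/zero-diagonal identities (your displayed identities for $\sum_{i\neq j}a_{ij}b_{i\cdot}$, $\sum_{i\neq j}a_{i\cdot}b_{i\cdot}$, $\sum_{i\neq j}a_{i\cdot}b_{\cdot j}$ are exactly the paper's (B.1)--(B.4)), and simplify the resulting coefficients. Your use of the placeholders $\alpha,\beta$ and the grouping by $S_1,S_2,S_3$ is a slightly tidier bookkeeping device, but the argument is the same.
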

For the proof see the Appendix.
Formula \eqref{eq:Omega_n} will be used to prove that
(i) the estimator in \eqref{eq:tAdottB} is a U-statistic and thus we can apply the relevant limit theorems to study its asymptotic behavior;
(ii) the estimator in \eqref{eq:tAdottB} can be computed in $O( n \log n)$ steps.

\subsection{Validating the Statistic is a U-Statistic}
\label{sec:u-stat}

Suppose $x_1, \ldots, x_n$ is a sample.
For positive integer $r$, let $\Phi_r$ denote all the distinct $r$-subsets of $\{1,2,\ldots,n\}$.
For a set $\varphi \subset \{1,\ldots,n\}$, we define notation $x_\varphi = \{x_i \mid i \in \phi\}$.
Let $h: \mathbb{R}^r \to R$ be a symmetric real-valued or complex-valued kernel function of $r$ variables.
For each $n\ge r$, the associated U-statistic of order $r$, $U_{nr}\colon \mathbb{R}^n \to R$, is equal to the average over ordered samples of size $r$ of the sample values $h(x_\varphi)$.
In other words,
\begin{equation}\label{eq:U-stat}
U_{nr}(x_1,\ldots, x_n) = \frac{1}{{n \choose r}} \sum_{\varphi \in \Phi_r} h(x_{\varphi}),
\end{equation}

For U-statistics, we can verify the following lemma.
\begin{lem}\label{lem:check01}
For $1\le i \le n$, we denote
\begin{equation}\label{eq:check02}
U^{-i}_{nr}(x_1,\ldots, x_n) = U_{n-1,r}(x_1,\ldots, x_{i-1},x_{i+1},\ldots, x_n),
\end{equation}
where $U_{n-1,r}(x_1,\ldots, x_{i-1},x_{i+1},\ldots, x_n)$ is defined in \eqref{eq:U-stat} after removing the element $x_i$.
Then we must have
\begin{equation}\label{eq:check01}
(n-r){n \choose r} U_{nr}(x_1,\ldots,x_n)
= \sum_{i=1}^n {n-1 \choose r} U^{-i}_{nr}(x_1,\ldots, x_n) .
\end{equation}
\end{lem}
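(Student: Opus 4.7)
The plan is to prove Lemma~\ref{lem:check01} by a straightforward double-counting (change of summation order) argument, since both sides are, up to a trivial normalization, sums of the kernel $h$ over certain families of $r$-subsets of $\{1,\ldots,n\}$.

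First, I would rewrite both sides in their ``unnormalized'' form. By the definition in \eqref{eq:U-stat}, multiplying through by $\binom{n}{r}$ gives
$\binom{n}{r} U_{nr}(x_1,\ldots,x_n) = \sum_{\varphi \in \Phi_r} h(x_\varphi)$,
so the left-hand side of \eqref{eq:check01} equals $(n-r) \sum_{\varphi \in \Phi_r} h(x_\varphi)$. Similarly, by the definition of $U^{-i}_{nr}$ in \eqref{eq:check02},
$\binom{n-1}{r} U^{-i}_{nr}(x_1,\ldots,x_n) = \sum_{\varphi \in \Phi_r,\, i \notin \varphi} h(x_\varphi)$,
the sum ranging over all $r$-subsets of $\{1,\ldots,n\} \setminus \{i\}$.

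Next, I would swap the two summations on the right-hand side of \eqref{eq:check01}. Writing
\begin{equation*}
\sum_{i=1}^{n} \binom{n-1}{r} U^{-i}_{nr}(x_1,\ldots,x_n)
= \sum_{i=1}^{n} \sum_{\substack{\varphi \in \Phi_r \\ i \notin \varphi}} h(x_\varphi)
= \sum_{\varphi \in \Phi_r} h(x_\varphi) \cdot \#\{\,i \in \{1,\ldots,n\} : i \notin \varphi\,\},
\end{equation*}
the inner count is exactly $n-r$ for every $\varphi \in \Phi_r$, since $|\varphi|=r$. Therefore the double sum collapses to $(n-r) \sum_{\varphi \in \Phi_r} h(x_\varphi)$, which matches the left-hand side.

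There is really no hard part here; the only thing to be careful about is the indexing, in particular to note that the condition $i \notin \varphi$ arises because, in $U^{-i}_{nr}$, the $r$-subsets $\varphi$ are drawn from $\{1,\ldots,n\}\setminus\{i\}$. Once the sums are arranged in this way, the identity is an immediate combinatorial identity and requires no further work. A brief sanity check against $\binom{n}{r}(n-r) = n\binom{n-1}{r}$ confirms that the normalizing constants are consistent with this interpretation.
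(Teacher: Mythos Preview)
Your proof is correct and follows essentially the same double-counting argument as the paper: the paper simply notes that each term $h(x_\varphi)$ is counted $(n-r)$ times on both sides, and your write-up makes this explicit by swapping the order of summation and observing that $\#\{\,i:i\notin\varphi\,\}=n-r$.
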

\begin{proof}[Proof of Lemma \ref{lem:check01}]
In \eqref{eq:check01}, each term $h(x_\varphi)$ is counted $(n-r)$ times on both sides.
Hence the equality holds.
\end{proof}

In fact, using arithmetic deduction, one can prove that the converse of the above is also true. In other words, the {\it jackknife invariance} is a necessary and sufficient condition for being U-statistics.
For a very similar (equivalent) approach see \citeasnoun{Lenth1983}.
\begin{lem}\label{lem:check02}
If there exists a positive integer $r > 0$, such that for any $n > r$,
function $U_{nr}(x_1,\ldots,x_n)$ satisfies \eqref{eq:check01} and \eqref{eq:check02},
then there must be a kernel function $h(\cdot)$ of order $r$, such that
$U_{nr}(x_1,\ldots,x_n)$  can be written in a form as in \eqref{eq:U-stat}; i.e., $U_{nr}(x_1,\ldots,x_n)$ is a U-statistic.
\end{lem}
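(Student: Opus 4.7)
The approach is to notice that the hypothesis is strong enough to uniquely determine $U_{nr}$ from the values of $U_{rr}$, and then to verify by induction on $n$ that this determination coincides with the U-statistic built from the kernel $h := U_{rr}$.

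First I would simplify the coefficient in \eqref{eq:check01}. Using the elementary identity $(n-r)\binom{n}{r} = n\binom{n-1}{r}$, the hypothesis rewrites as
$$
U_{nr}(x_1,\ldots,x_n) \;=\; \frac{1}{n} \sum_{i=1}^n U^{-i}_{nr}(x_1,\ldots,x_n),
$$
i.e., $U_{nr}$ is the average of the $n$ leave-one-out restrictions $U_{n-1,r}$. In this form the hypothesis is a clean forward recursion propagating $U_{n-1,r}$ to $U_{nr}$.

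Next I would define the candidate kernel $h(x_1,\ldots,x_r) := U_{rr}(x_1,\ldots,x_r)$, and let $\widetilde{U}_{nr}$ denote the U-statistic of order $r$ with kernel $h$ given by \eqref{eq:U-stat}. The conclusion reduces to showing $U_{nr} \equiv \widetilde{U}_{nr}$ for every $n \geq r$. I would prove this by induction on $n$. The base case $n = r$ is automatic because the U-statistic of order $r$ on a sample of size $r$ collapses to the kernel itself, so $\widetilde{U}_{rr} = h = U_{rr}$. For the inductive step, assume $U_{n-1,r} \equiv \widetilde{U}_{n-1,r}$. Applying Lemma \ref{lem:check01} to the bona fide U-statistic $\widetilde{U}_{nr}$ yields that $\widetilde{U}_{nr}$ satisfies the same simplified averaging identity as $U_{nr}$. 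Since both quantities are then equal to the average of the same $n$ values $U^{-i}_{nr} = \widetilde{U}^{-i}_{nr}$ (by the inductive hypothesis), they must coincide.

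The main obstacle is mostly bookkeeping: one must invoke the hypothesis in the direction that propagates from $n-1$ to $n$, rather than collapsing downward, so that the induction runs the correct way. A smaller subtlety is that the argument tacitly requires $U_{rr}$ to be symmetric in its $r$ arguments; otherwise the kernel extracted from $U_{rr}$ would not produce a symmetric U-statistic. Symmetry, however, is inherent in any quantity formed as a symmetric function of the sample, and so this causes no real difficulty. The genuine content of the lemma, once the recursion is rewritten in averaged form, is simply that \eqref{eq:check01} leaves no freedom beyond the choice of the values $U_{rr}$ on $r$-element subsamples.
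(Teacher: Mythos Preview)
Your proof is correct and follows essentially the same inductive strategy as the paper: define the kernel from the base level $U_{rr}$ and propagate upward via the leave-one-out recursion. Your presentation is slightly cleaner in that you first simplify the binomial coefficients to the averaging form $U_{nr} = \frac{1}{n}\sum_i U^{-i}_{nr}$ and explicitly invoke Lemma~\ref{lem:check01} on the candidate U-statistic $\widetilde{U}_{nr}$ to close the induction, whereas the paper leaves these mechanical steps implicit.
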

A proof of the above can be found in the Appendix.

The  two lemmas above show that the recursive relation \eqref{eq:check01} is a necessary and sufficient condition for a U-statistic.
For later use, we explicitly restate the result below.
\begin{theorem}\label{th:check01}
Let $\Omega_n(x_1,\ldots,x_n)$ be a statistic of a sample $x_1,\ldots,x_n$.
Let $\Omega^{-i}_{n-1}(x_1,\ldots,x_n)$, $i=1,2,\ldots,n$,
be a statistic of a reduced sample $x_1,\ldots,x_{i-1},x_{i+1}, \ldots,x_n$; i.e., $\Omega^{-i}_{n-1}(x_1,\ldots,x_n)$ is the statistic after removing the observation $x_i$.
The necessary and sufficient condition for $\Omega_n(x_1,\ldots,x_n)$ to be a U-statistic of order $r$ is
\begin{equation}\label{eq:check03}
n \cdot \Omega_n(x_1,\ldots,x_n) = \sum^n_{i=1} \Omega^{-i}_{n-1}(x_1,\ldots,x_n)
\end{equation}
holds for all $n \ge r$.
\end{theorem}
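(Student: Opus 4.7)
The plan is to derive the theorem as essentially a bookkeeping reconciliation of the two preceding lemmas, the only substantive step being the identification of the correct combinatorial coefficients. The key algebraic identity is
\[
(n-r)\binom{n}{r} \;=\; n\binom{n-1}{r},
\]
which follows immediately from expanding both sides as $\frac{n!}{r!\,(n-r-1)!}$. This identity converts the normalization appearing in Lemma \ref{lem:check01} into the simpler normalization appearing in \eqref{eq:check03}.

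For the necessity direction, I would assume $\Omega_n = U_{nr}$ for some symmetric kernel $h$ of order $r$. Applying Lemma \ref{lem:check01} directly yields
\[
(n-r)\binom{n}{r}\, \Omega_n(x_1,\ldots,x_n) \;=\; \sum_{i=1}^n \binom{n-1}{r}\, \Omega^{-i}_{n-1}(x_1,\ldots,x_n).
\]
Dividing both sides by $\binom{n-1}{r}$ (which is nonzero for $n > r$) and using the identity above produces \eqref{eq:check03}. The boundary case $n=r$ is trivial since both sides vanish.

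For the sufficiency direction, I would assume $\Omega_n$ satisfies \eqref{eq:check03} for every $n \ge r$ and multiply both sides by $\binom{n-1}{r}$; the same identity then recovers \eqref{eq:check01}, at which point Lemma \ref{lem:check02} supplies a symmetric kernel $h$ of order $r$ with $\Omega_n = U_{nr}$ expressed as in \eqref{eq:U-stat}.

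There is no genuine obstacle at the level of this theorem: all the real content is packaged inside Lemma \ref{lem:check02}, whose proof is deferred to the appendix. The only point worth being careful about is that the identity used to rescale the coefficients requires $n > r$, so the recursive hypothesis must be checked only for $n > r$ (the case $n = r$ does not constrain the kernel anyway, since then $\binom{n}{r} = 1$ and $\Omega_r(x_1,\ldots,x_r) = h(x_1,\ldots,x_r)$ simply defines the kernel on symmetric tuples).
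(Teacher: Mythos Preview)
Your proposal is correct and follows essentially the same approach as the paper: the paper's proof simply reads ``Combine Lemma \ref{lem:check01} and Lemma \ref{lem:check02}, and simplify \eqref{eq:check01}, we have \eqref{eq:check03},'' and you have spelled out that simplification via the identity $(n-r)\binom{n}{r}=n\binom{n-1}{r}$. Your remark about the boundary case $n=r$ is a useful clarification that the paper leaves implicit.
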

The above can be extended to a two-sample problem, in which a sample is
$(x_1,y_1), (x_2,y_2),\ldots, (x_n,y_n)$ for $n \ge 1$.
By replacing $x_i$ with $(x_i,y_i)$, all the previous arguments still hold.
\begin{proof}[Proof of Theorem \ref{th:check01}].
Combine Lemma \ref{lem:check01} and Lemma \ref{lem:check02}, and
simplify \eqref{eq:check01}, we have \eqref{eq:check03}.
\end{proof}


Let $\Omega_n$ denote the inner product that is defined in \eqref{eq:tAdottB}.
Note $\Omega_n$ is based on the entire sample (i.e., $(x_i,y_i), i=1,2,\ldots,
n$.
For $1 \le i \le n$, let $\Omega^{-i}_{n-1}$ denote the corresponding statistic
after knocking out pair $(x_i,y_i)$ from the entire sample.

The following lemma establish counterpart for $\Omega_{n-1}^{-k}$, where $k=1,2,\ldots,n$.
\begin{lem} \label{lem:Omega2}
For $1\le k \le n$, let $a_{i\cdot}^{-k}$, $b_{i\cdot}^{-k}$, $a_{\cdot \cdot}^{-k}$,
and $b_{\cdot \cdot}^{-k}$ denote the corresponding sums after entry $(x_k,y_k)$ is removed from
the sample.
If $\Omega_{n-1}^{-k}$ is the inner product that is defined in \eqref{eq:tAdottB}
after knocking off the $k$-th entry $(x_k,y_k)$,
we have
\begin{eqnarray}
\Omega_{n-1}^{-k} &=&
\frac{1}{(n-1)(n-4)} \sum_{i \neq j, i\neq k, j\neq k} a_{ij}b_{ij}
-\frac{2}{(n-1)(n-3)(n-4)} \sum_{i=1, i\neq k}^n a_{i\cdot}^{-k} b_{i\cdot}^{-k} \nonumber \\
&& +\frac{a_{\cdot \cdot}^{-k} b_{\cdot \cdot}^{-k}}{(n-1)(n-2)(n-3)(n-4)}. \label{eq:Omega_n-k}
\end{eqnarray}
\end{lem}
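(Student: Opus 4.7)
The plan is to obtain Lemma \ref{lem:Omega2} as a direct corollary of Lemma \ref{lem:Omega1}, applied to the reduced sample of size $n-1$ formed by deleting the pair $(x_k,y_k)$. By construction, $\Omega_{n-1}^{-k}$ is exactly the statistic $(\widetilde{A}\cdot\widetilde{B})$ of \eqref{eq:tAdottB} computed from the distance matrices of the reduced $(n-1)$-sample, so Lemma \ref{lem:Omega1} applies verbatim to this smaller sample, with sample size $n-1$ in place of $n$.

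First I would restate Lemma \ref{lem:Omega1} with $n$ replaced by $m:=n-1$ and the matrices $(a_{ij})$, $(b_{ij})$ replaced by their $(n-1)\times(n-1)$ submatrices obtained by deleting the $k$-th row and column. This immediately produces the three coefficients $\tfrac{1}{(n-1)(n-4)}$, $\tfrac{2}{(n-1)(n-3)(n-4)}$, and $\tfrac{1}{(n-1)(n-2)(n-3)(n-4)}$ appearing in \eqref{eq:Omega_n-k} (note that $m-3=n-4$, $m-2=n-3$, $m-1=n-2$).

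Next I would translate each sum back into the original sample's notation. The pairwise sum $\sum_{i\neq j} a_{ij}b_{ij}$ over the reduced sample is $\sum_{i\neq j,\,i\neq k,\,j\neq k} a_{ij}b_{ij}$. The row sums of the reduced matrix are, by definition, $a_{i\cdot}^{-k}=\sum_{\ell\neq k} a_{i\ell}$ and similarly for $b$; hence the middle sum becomes $\sum_{i\neq k} a_{i\cdot}^{-k} b_{i\cdot}^{-k}$. Finally the total sum of the reduced matrix is $a_{\cdot\cdot}^{-k}$, and similarly $b_{\cdot\cdot}^{-k}$. Plugging these into the specialized Lemma \ref{lem:Omega1} yields \eqref{eq:Omega_n-k} term by term.

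The only real subtlety, which I would flag but not belabor, is the condition on sample size: Lemma \ref{lem:Omega1} requires the statistic to be well-defined, so we need $n-1>3$, i.e. $n\geq 5$; the denominator factor $(n-4)$ in the first term is the reason. Beyond this bookkeeping, no genuine obstacle arises, since the lemma is a pure specialization: the $\mathcal{U}$-centered matrix construction of \eqref{eq:Aij} depends only on the input matrix and its size, and deleting the $k$-th row and column from $(a_{ij})$ gives precisely the distance matrix of the reduced sample.
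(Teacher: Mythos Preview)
Your proposal is correct and matches the paper's own treatment: the paper does not give a separate proof of Lemma~\ref{lem:Omega2}, stating only that it ``will be identical with the proof of Lemma~\ref{lem:Omega1}.'' Your version is in fact slightly cleaner, since you apply the \emph{result} of Lemma~\ref{lem:Omega1} to the reduced $(n-1)$-sample rather than rerunning its algebra, but the underlying idea is the same.
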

We will not provide the proof for Lemma \ref{lem:Omega2}, because it will be identical with
the proof of Lemma \ref{lem:Omega1}.

\begin{theorem}\label{th:u-stat}
Estimator $\Omega_n$---the inner product that is
defined in \eqref{eq:tAdottB}---is a U-statistic.
The kernel function of the corresponding U-statistic is
the inner product that was defined in
\eqref{eq:tAdottB} with $n=4$.
\end{theorem}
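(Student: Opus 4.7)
The plan is to invoke Theorem \ref{th:check01}, which reduces the claim that $\Omega_n$ is a U-statistic of order $r$ to verifying the single recursion
\[
n \cdot \Omega_n = \sum_{k=1}^n \Omega_{n-1}^{-k}.
\]
Lemma \ref{lem:Omega1} already writes $\Omega_n$ as a linear combination of the three aggregates $\sum_{i \ne j} a_{ij} b_{ij}$, $\sum_i a_{i\cdot} b_{i\cdot}$, and $a_{\cdot\cdot} b_{\cdot\cdot}$, while Lemma \ref{lem:Omega2} gives the parallel decomposition for $\Omega_{n-1}^{-k}$ in terms of the leave-one-out analogues. So the verification splits cleanly into three scalar identities, one per aggregate, that I would check one at a time.

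For the first piece, $\sum_{k=1}^n \sum_{i \ne j,\, i \ne k,\, j \ne k} a_{ij} b_{ij}$, each ordered pair $(i,j)$ with $i \ne j$ is excluded exactly when $k \in \{i,j\}$, hence is counted $n-2$ times and the sum equals $(n-2) \sum_{i \ne j} a_{ij} b_{ij}$. For the second piece, the zero-diagonal assumption gives $a_{i\cdot}^{-k} = a_{i\cdot} - a_{ik}$ and $b_{i\cdot}^{-k} = b_{i\cdot} - b_{ik}$ for $i \ne k$; expanding the product $(a_{i\cdot}-a_{ik})(b_{i\cdot}-b_{ik})$ and applying $\sum_{k \ne i} a_{ik} = a_{i\cdot}$ (and the $b$-analogue) collapses $\sum_k \sum_{i \ne k} a_{i\cdot}^{-k} b_{i\cdot}^{-k}$ into a linear combination of $\sum_i a_{i\cdot} b_{i\cdot}$ and $\sum_{i \ne j} a_{ij} b_{ij}$. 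For the third, $a_{\cdot\cdot}^{-k} = a_{\cdot\cdot} - 2 a_{k\cdot}$ (and likewise for $b$), so $\sum_k a_{\cdot\cdot}^{-k} b_{\cdot\cdot}^{-k}$ expands into a combination of $a_{\cdot\cdot} b_{\cdot\cdot}$ and $\sum_k a_{k\cdot} b_{k\cdot}$. Substituting these three reductions into Lemma \ref{lem:Omega2} with its prefactors $\tfrac{1}{(n-1)(n-4)}$, $\tfrac{2}{(n-1)(n-3)(n-4)}$, and $\tfrac{1}{(n-1)(n-2)(n-3)(n-4)}$, I would collect like terms and check that the coefficient of each aggregate matches $n$ times the corresponding coefficient in Lemma \ref{lem:Omega1}.

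The main obstacle is purely algebraic bookkeeping: reconciling the denominators $(n-1)(n-2)(n-3)(n-4)$ coming from Lemma \ref{lem:Omega2} with the denominators $n(n-1)(n-2)(n-3)$ in Lemma \ref{lem:Omega1}, and verifying three coefficient equalities in parallel. Once the recursion is in hand, Theorem \ref{th:check01} produces a symmetric kernel $h$ for which $\Omega_n$ is a U-statistic. Because $\Omega_n$ is defined precisely when $n \ge 4$ (the factor $\tfrac{1}{n(n-3)}$ in \eqref{eq:tAdottB} forces this), the kernel order must be $r = 4$, and the kernel itself is the value of $\Omega_4$, i.e. the inner product $(\widetilde{A} \cdot \widetilde{B})$ evaluated on a sample of size four, as claimed.
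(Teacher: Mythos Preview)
Your proposal is correct and follows essentially the same route as the paper: invoke Theorem~\ref{th:check01}, use the leave-one-out identities $a_{i\cdot}^{-k}=a_{i\cdot}-a_{ik}$, $a_{\cdot\cdot}^{-k}=a_{\cdot\cdot}-2a_{k\cdot}$ (and the $b$-analogues), reduce $\sum_k \Omega_{n-1}^{-k}$ to a combination of the three aggregates via the counting and expansion arguments you describe, and match coefficients against Lemma~\ref{lem:Omega1}. The paper carries out exactly these computations explicitly and then identifies the kernel as the $n=4$ case, just as you do.
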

See a proof in the Appendix.

Now we know that \eqref{eq:tAdottB} is a U-statistic and it is easy to see that \eqref{eq:tAdottB} is in fact a U-statistic with a degenerate kernel under the null hypothesis of independence of $X$ and $Y$,
thus we can see from Corollary 4.4.2 of
\citeasnoun{KoroljukEtal_1994} that if the second moments of $X$ and $Y$ are finite then under the null hypothesis, the limit distribution of
$n (\widetilde{A} \cdot \widetilde{B})$  has the form
$\sum_{i=1}^{\infty} \lambda_i (Z_i^2 -1)$,
where  $\lambda_i \ge 0$, and $Z_i$ are i.i.d. standard normal random variables.
Under the alternative hypothesis we have that $n | (\widetilde{A} \cdot \widetilde{B})| \to \infty$, thus we can easily construct a consistent test of independence.
For a technically much more difficult approach, see \citeasnoun{Szekely2007} where a similar result was derived for a related V-statistic using deep results on complex-valued Gaussian processes.

\section{Fast Algorithm}
\label{sec:algorithm}

We now argue that when $X$ and $Y$ are univariate, there is an O$(n\log n)$ algorithm to implement \eqref{eq:Omega_n}.
We start with several intermediate results, which are presented as lemmas below.
\begin{lem} \label{lem:AiDot}
Denote
$$
x_\cdot = \sum^n_{i=1} x_i.
$$
For $1 \le i \le n$, we also denote
\begin{eqnarray*}
\alpha_i^x &=& \sum_{x_\ell < x_i} 1, \\
\beta_i^x  &=& \sum_{x_\ell < x_i} x_\ell.
\end{eqnarray*}
We have
\begin{equation}\label{eq:aidot}
a_{i \cdot} = x_\cdot + (2\alpha_i^x - n)x_i - 2 \beta_i^x.
\end{equation}
\end{lem}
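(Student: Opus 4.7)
The plan is to verify the identity by a direct calculation: expand the quantity $a_{i\cdot} = \sum_{\ell=1}^n |x_i - x_\ell|$ into three pieces according to whether $x_\ell$ is strictly less than, equal to, or strictly greater than $x_i$, and then rewrite the ``greater than'' piece in terms of the global quantity $x_\cdot$ together with the already-named quantities $\alpha_i^x$ and $\beta_i^x$.

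Concretely, I would first write
\[
a_{i\cdot} \;=\; \sum_{\ell : x_\ell < x_i}(x_i - x_\ell) \;+\; 0 \;+\; \sum_{\ell : x_\ell > x_i}(x_\ell - x_i),
\]
noting that the $\ell$'s with $x_\ell = x_i$ (which always include $\ell = i$ itself and possibly additional indices if there are ties) contribute $0$ to the sum. By definition, the first sum equals $\alpha_i^x\, x_i - \beta_i^x$. For the second sum, if I let $\gamma_i$ denote the number of indices $\ell$ with $x_\ell = x_i$, then the count of indices with $x_\ell > x_i$ is $n - \alpha_i^x - \gamma_i$, and the sum $\sum_{\ell : x_\ell > x_i} x_\ell$ equals $x_\cdot - \beta_i^x - \gamma_i x_i$. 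Substituting these gives
\[
a_{i\cdot} \;=\; \alpha_i^x\, x_i - \beta_i^x \;+\; (x_\cdot - \beta_i^x - \gamma_i x_i) \;-\; (n - \alpha_i^x - \gamma_i)\, x_i.
\]
Collecting the $x_i$-terms shows that the $\gamma_i$ contributions cancel, leaving $a_{i\cdot} = x_\cdot + (2\alpha_i^x - n)x_i - 2\beta_i^x$, which is the claimed formula.

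The only subtle point, and really the sole potential obstacle, is the bookkeeping of ties (including the self-index $\ell = i$), since the definitions of $\alpha_i^x$ and $\beta_i^x$ use strict inequality. I would therefore emphasize in the write-up that the cancellation of the $\gamma_i$ terms is what makes the formula clean regardless of whether the sample values are distinct. No further machinery is needed; the result follows purely from splitting and re-indexing the absolute-value sum.
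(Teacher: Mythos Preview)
Your proposal is correct and follows essentially the same direct-splitting argument as the paper's proof. The only difference is that the paper tacitly assumes the sample values are distinct (writing $\sum_{x_\ell > x_i} 1 = n-1-\alpha_i^x$), whereas you introduce $\gamma_i$ to track ties explicitly and observe that the $\gamma_i$ contributions cancel; this is a mild improvement in bookkeeping but not a different route.
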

A proof is relegated to the appendix.

Due to symmetry, the following is the counterpart fact for $Y$. We state it without a proof.
\begin{lem} \label{lem:BiDot}
Denote
$$
y_\cdot = \sum^n_{i=1} y_i.
$$
For $1 \le i \le n$, we denote
\begin{eqnarray*}
\alpha_i^y &=& \sum_{y_\ell < y_i} 1, \\
\beta_i^y  &=& \sum_{y_\ell < y_i} y_\ell.
\end{eqnarray*}
We have
\begin{equation}\label{eq:bidot}
b_{i \cdot} = y_\cdot + (2\alpha_i^y - n)y_i - 2 \beta_i^y.
\end{equation}
\end{lem}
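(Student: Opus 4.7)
The plan is to prove the identity directly from the definition of $b_{i\cdot}$ by splitting the sum $\sum_{\ell=1}^n |y_i - y_\ell|$ at the value $y_i$, and then recognizing the two resulting pieces as expressible in terms of the count $\alpha_i^y$, the partial sum $\beta_i^y$, and the global sum $y_\cdot$. Since the statement is literally the copy of Lemma \ref{lem:AiDot} with $x$ replaced by $y$ throughout, the cleanest route is to observe that the proof of Lemma \ref{lem:AiDot} never uses any property specific to the $x$-sample, so it applies verbatim to the $y$-sample. Below I sketch the argument in the $y$-notation for concreteness.

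First, I would partition the sum according to the sign of $y_\ell - y_i$:
\[
b_{i\cdot} \;=\; \sum_{\ell=1}^n |y_i - y_\ell| \;=\; \sum_{\ell:\, y_\ell < y_i} (y_i - y_\ell) \;+\; \sum_{\ell:\, y_\ell \ge y_i} (y_\ell - y_i).
\]
The first sum contains exactly $\alpha_i^y$ terms and the $y_\ell$ there sum to $\beta_i^y$, so it evaluates to $\alpha_i^y \, y_i - \beta_i^y$. The second sum contains $n - \alpha_i^y$ terms and the $y_\ell$ there sum to $y_\cdot - \beta_i^y$, so it evaluates to $(y_\cdot - \beta_i^y) - (n - \alpha_i^y)\, y_i$. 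Adding the two expressions and collecting like terms yields
\[
b_{i\cdot} \;=\; y_\cdot + (2\alpha_i^y - n)\, y_i - 2\beta_i^y,
\]
which is \eqref{eq:bidot}.

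The only potential subtlety is the treatment of ties (indices $\ell$ with $y_\ell = y_i$, including $\ell = i$ itself). This is not really an obstacle: such terms contribute $|y_i - y_\ell| = 0$ on the left-hand side, so they can be assigned to either of the two groups in the partition without changing the total. With the convention used in the definition of $\alpha_i^y$ and $\beta_i^y$ (strict inequality $y_\ell < y_i$), tied indices fall in the ``$\ge$'' group, and the bookkeeping above is internally consistent. No step of the argument requires anything beyond this elementary splitting, so I expect the entire proof to reduce to a one-line justification together with a pointer to Lemma \ref{lem:AiDot}.
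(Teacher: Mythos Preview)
Your proposal is correct and mirrors the paper's own treatment: the paper states Lemma~\ref{lem:BiDot} without proof, pointing to symmetry with Lemma~\ref{lem:AiDot}, whose appendix proof is the same sign-splitting of $\sum_\ell |y_i-y_\ell|$ you carry out. Your use of the partition $\{y_\ell<y_i\}\cup\{y_\ell\ge y_i\}$ in place of the paper's strict $\{<\}\cup\{>\}$ split is a harmless variant that, as you note, handles ties without any extra assumption.
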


Using formulas \eqref{eq:aidot} and \eqref{eq:bidot},
the following two equations can be easily established. We state them without a proof.
\begin{cor}\label{cor:abdots}
We have
\begin{equation}\label{eq:adotdot}
a_{\cdot \cdot} =  2 \sum_{i=1}^n \alpha_i^x x_i - 2 \sum_{i=1}^n \beta_i^x,
\end{equation}
and
\begin{equation}\label{eq:bdotdot}
b_{\cdot \cdot} =  2 \sum_{i=1}^n \alpha_i^y y_i - 2 \sum_{i=1}^n \beta_i^y.
\end{equation}
\end{cor}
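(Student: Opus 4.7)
The plan is to obtain both equations by simply summing the per-index identities of Lemmas \ref{lem:AiDot} and \ref{lem:BiDot} over $i$, then collecting like terms. Since $a_{\cdot\cdot} = \sum_{i,\ell} a_{i\ell} = \sum_{i=1}^n a_{i\cdot}$ and likewise $b_{\cdot\cdot} = \sum_{i=1}^n b_{i\cdot}$, each corollary identity will follow once I verify a straightforward cancellation.

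First I would substitute \eqref{eq:aidot} into $a_{\cdot\cdot} = \sum_{i=1}^n a_{i\cdot}$ to obtain
\begin{equation*}
a_{\cdot\cdot} = \sum_{i=1}^n \bigl[x_\cdot + (2\alpha_i^x - n)x_i - 2\beta_i^x\bigr] = n\, x_\cdot + 2\sum_{i=1}^n \alpha_i^x x_i - n\sum_{i=1}^n x_i - 2\sum_{i=1}^n \beta_i^x.
\end{equation*}
Next, I would observe that $\sum_{i=1}^n x_i = x_\cdot$ by definition, so the first and third terms cancel, yielding \eqref{eq:adotdot}. The identity \eqref{eq:bdotdot} is obtained by the identical argument applied to Lemma \ref{lem:BiDot}, with $x$ replaced by $y$ throughout.

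There is no real obstacle here; the main step is simply the bookkeeping of noticing that the two ``linear in $x_i$'' contributions (one from the constant $x_\cdot$ inside the bracket summed $n$ times, the other from the $-n x_i$ coefficient summed over $i$) exactly cancel. The result is useful downstream because both $\alpha_i^x$ and $\beta_i^x$ can be computed for all $i$ in $O(n \log n)$ time using an order-statistic structure (e.g., an AVL tree as mentioned in the introduction), so \eqref{eq:adotdot} and \eqref{eq:bdotdot} let us assemble the terms $a_{\cdot\cdot}$ and $b_{\cdot\cdot}$ appearing in \eqref{eq:Omega_n} within that budget.
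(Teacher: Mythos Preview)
Your proof is correct and follows exactly the route the paper indicates: the paper does not give a formal proof of this corollary, stating only that \eqref{eq:adotdot} and \eqref{eq:bdotdot} ``can be easily established'' from \eqref{eq:aidot} and \eqref{eq:bidot}, which is precisely the summation-and-cancellation argument you carry out.
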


The following lemma will be used.
\begin{lem}\label{lem:aijbij}
We define a sign function, for $\forall 1 \le i,j \le n$,
$$
S_{ij} = \left\{\begin{array}{ll}
+1, & \mbox{ if }(x_i-x_j)(y_i-y_j) > 0, \\
-1, & \mbox{ otherwise.}
\end{array}
 \right.
$$
For any sequence $\{c_j, j=1,\ldots, n\}$, for $1 \le i \le n$, we define
$$
\gamma_i(\{c_j\}) = \sum_{j: j \neq i} c_j S_{ij}.
$$
The following is true:
\begin{equation}\label{eq:sumab}
\sum_{i \neq j} a_{ij}b_{ij} = \sum_{i=1}^n \left[ x_i y_i \gamma_i(\{1\})
+ \gamma_i(\{x_j y_j\}) -x_i \gamma_i(\{y_j\}) - y_i \gamma_i(\{x_j\})  \right].
\end{equation}
\end{lem}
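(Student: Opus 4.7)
The plan is to observe that for each pair $(i,j)$ with $i\neq j$, the product $a_{ij}b_{ij}=|x_i-x_j|\,|y_i-y_j|$ can be rewritten by pulling the absolute values together into a single sign. Specifically, for any real $u,v$, $|u|\,|v|=\operatorname{sign}(uv)\cdot uv$, where we interpret the right-hand side as $0$ when $uv=0$. Applying this with $u=x_i-x_j$ and $v=y_i-y_j$ gives
\begin{equation*}
a_{ij}b_{ij}=S_{ij}(x_i-x_j)(y_i-y_j),
\end{equation*}
since the only issue is when $(x_i-x_j)(y_i-y_j)=0$, in which case the paper's convention $S_{ij}=-1$ is harmless (the factor $(x_i-x_j)(y_i-y_j)$ vanishes anyway, matching $a_{ij}b_{ij}=0$).

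With this identity in hand, the rest is bookkeeping. First I would expand the product:
\begin{equation*}
S_{ij}(x_i-x_j)(y_i-y_j)=S_{ij}\bigl[x_iy_i-x_iy_j-y_ix_j+x_jy_j\bigr].
\end{equation*}
Then I would sum over all $j\neq i$ for a fixed $i$, pulling the $i$-indexed factors outside the inner sum, which gives
\begin{equation*}
\sum_{j\neq i} a_{ij}b_{ij}=x_iy_i\sum_{j\neq i}S_{ij}-x_i\sum_{j\neq i}S_{ij}y_j-y_i\sum_{j\neq i}S_{ij}x_j+\sum_{j\neq i}S_{ij}x_jy_j.
\end{equation*}
By the definition of $\gamma_i$, the four inner sums are exactly $\gamma_i(\{1\})$, $\gamma_i(\{y_j\})$, $\gamma_i(\{x_j\})$, and $\gamma_i(\{x_jy_j\})$, respectively. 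Summing over $i$ yields \eqref{eq:sumab}.

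The only potential obstacle is the handling of ties, i.e.\ observations with equal $x$- or $y$-coordinates. Because $S_{ij}$ is defined as $-1$ whenever $(x_i-x_j)(y_i-y_j)\le 0$, one must check that the identity $|x_i-x_j|\,|y_i-y_j|=S_{ij}(x_i-x_j)(y_i-y_j)$ still holds in boundary cases; as noted above, both sides are zero when either difference vanishes, so the formula remains valid and no extra case analysis is needed. Everything else is an algebraic rearrangement.
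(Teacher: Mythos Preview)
Your proof is correct and follows essentially the same route as the paper: write $|x_i-x_j|\,|y_i-y_j|=S_{ij}(x_i-x_j)(y_i-y_j)$, expand the bilinear product, and regroup the inner sums as $\gamma_i(\{1\}),\gamma_i(\{x_j\}),\gamma_i(\{y_j\}),\gamma_i(\{x_jy_j\})$. Your explicit remark that the tie case $(x_i-x_j)(y_i-y_j)=0$ is harmless because both sides vanish is a welcome clarification that the paper leaves implicit.
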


\begin{proof}[Proof of Lemma \ref{lem:aijbij}]
We have
\begin{eqnarray*}
\sum_{i \neq j} a_{ij}b_{ij} &=& \sum_{i \neq j} |x_i - x_j| \cdot |y_i - y_j| \\
&=& \sum_{i=1}^n \sum_{j: j \neq i} (x_i y_i + x_j y_j - x_i y_j -x_j y_i )S_{ij} \\
&=& \sum_{i=1}^n \left[x_i y_i \sum_{j: j \neq i} S_{ij} + \sum_{j: j \neq i} x_j y_j S_{ij}
- x_i \sum_{j: j \neq i} y_j S_{ij} - y_i \sum_{j: j \neq i} x_j S_{ij} \right].
\end{eqnarray*}
Per the definition of $\gamma_i(\{\cdots\})$, one can verify that the above equates to \eqref{eq:sumab}.
\end{proof}

\begin{lem}\label{lem:gamma}
For any sequence $\{c_j, j=1,\ldots, n\}$, there is an O$(n \log n)$ algorithm to compute for all
$\gamma_i(\{c_j\})$ ($= \sum_{j: j \neq i} c_j S_{ij}$),
where $i=1,\ldots, n$.
\end{lem}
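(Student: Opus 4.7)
The plan is to reduce the computation of all $\gamma_i(\{c_j\})$ to two batches of two-dimensional prefix-sum queries, each of which can be served by a self-balancing binary search tree (the AVL-tree device already mentioned in the introduction) or by a Fenwick tree indexed on the ranks of the $y$'s; this is exactly the same data-structure trick used in the Knight/Christensen fast algorithm for Kendall's $\tau$ cited earlier.

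First I would decompose $\gamma_i$ into a concordant and a discordant piece. Set $C := \sum_{j=1}^n c_j$ and
\[
P_i := \sum_{j \neq i,\, S_{ij}=+1} c_j, \qquad N_i := \sum_{j \neq i,\, S_{ij}=-1} c_j.
\]
Then $\gamma_i = P_i - N_i$, while $P_i + N_i = C - c_i$, so
\[
\gamma_i = 2 P_i - C + c_i.
\]
Since $C$ is obtained in $O(n)$ time, it suffices to compute $P_i$ for every $i$ in total time $O(n \log n)$. Splitting the concordant set according to whether $x_j < x_i$ or $x_j > x_i$,
\[
P_i \;=\; \underbrace{\sum_{x_j<x_i,\; y_j<y_i} c_j}_{=:\, P_i^{<<}} \;+\; \underbrace{\sum_{x_j>x_i,\; y_j>y_i} c_j}_{=:\, P_i^{>>}}.
\]

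Next I would compute $\{P_i^{<<}\}$ by a single left-to-right sweep. Sort the indices so that $x_{\pi(1)} \le x_{\pi(2)} \le \cdots \le x_{\pi(n)}$; this costs one $O(n\log n)$ sort. Also precompute the rank of each $y_i$ among $y_1,\dots,y_n$. Maintain an initially empty balanced binary search tree keyed by these $y$-ranks, each node carrying the sum of the $c$-weights stored in its subtree. Process the indices in the order $\pi(1), \pi(2),\dots,\pi(n)$: when index $\pi(k)$ is processed, first query the tree for the sum of all stored weights whose $y$-rank is strictly less than the rank of $y_{\pi(k)}$ and record this as $P_{\pi(k)}^{<<}$, then insert a node of weight $c_{\pi(k)}$ at the rank of $y_{\pi(k)}$. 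Each query and each insertion costs $O(\log n)$, for a total of $O(n\log n)$. An analogous sweep in the reverse $x$-order, with queries now asking for the sum of weights of rank strictly greater than that of $y_{\pi(k)}$, produces $\{P_i^{>>}\}$ in $O(n\log n)$ as well. Combining via $\gamma_i = 2(P_i^{<<}+P_i^{>>}) - C + c_i$ finishes the job.

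The only real obstacle is tie handling, because the definition of $S_{ij}$ forces $S_{ij}=-1$ whenever $x_i=x_j$ or $y_i=y_j$, so ties must be attributed to $N_i$ rather than $P_i$. I would deal with $x$-ties by processing all indices sharing a common $x$-value as a batch, performing every query in the batch before any of the batch's insertions, so that tied points never contribute to each other's "$<<$" count; $y$-ties are handled automatically by the strict-inequality prefix-sum query (rank strictly less than the current one), provided a consistent tie-breaking rule is used when assigning $y$-ranks. Once this bookkeeping is fixed, the complexity bound is transparent, and the lemma follows.
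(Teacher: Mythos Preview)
Your proposal is correct and follows essentially the same route as the paper: sort by $x$, reduce $\gamma_i$ to two-dimensional dominance (prefix) sums over the $y$-ranks, and evaluate those with a balanced-tree / Fenwick-type structure in $O(\log n)$ per query, exactly the Knight--Christensen device. The only difference is a small algebraic economy: the paper pushes your identity one step further, writing $P_i^{>>}$ in terms of $P_i^{<<}$ and two one-dimensional partial sums, to obtain
\[
\gamma_i(\{c_j\}) \;=\; c_\cdot - c_i - 2\!\!\sum_{j:\,y_j<y_i}\! c_j \;-\; 2\!\!\sum_{j:\,j<i}\! c_j \;+\; 4\!\!\sum_{j:\,j<i,\;y_j<y_i}\! c_j,
\]
so that a single left-to-right sweep suffices instead of your two sweeps. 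This does not change the asymptotic complexity, and your explicit treatment of ties (batching equal-$x$ queries before insertions, strict prefix queries in $y$) is in fact more careful than the paper, which simply assumes distinct values.
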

Again, we relegate the proof to the appendix.
The main idea of the proposed algorithm is a modification as well as an extension of
the idea that was used in \citeasnoun{Knight1966} and \citeasnoun{Christensen2005}, which developed a fast algorithm
for computing the Kendall's $\tau$ rank correlation coefficient.
The principle of the AVL tree structure \cite{AVL1962} was adopted.
Despite they are in a similar spirit, the algorithmic details are different.
We now present the main result in the following theorem.
\begin{theorem}\label{th:fastAlgo}
The unbiased estimator of the squared population distance covariance
(that was defined in \eqref{eq:tAdottB}) can be computed by an
O($n \log n$) algorithm.
\end{theorem}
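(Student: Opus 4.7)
The plan is to attack each of the three terms in the closed-form expression \eqref{eq:Omega_n} for $\Omega_n$ from Lemma \ref{lem:Omega1} separately, and to show that each can be evaluated in $O(n \log n)$ time. Since the theorem's estimator in \eqref{eq:tAdottB} equals $\Omega_n$ exactly, bounding each piece suffices.

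First I would dispatch the easy pieces. Sort $\{x_1,\ldots,x_n\}$ in $O(n \log n)$ time, which lets me read off the ranks $\alpha_i^x$ and the partial sums $\beta_i^x$ for every $i$ in a single linear sweep. By Lemma \ref{lem:AiDot}, this yields all $a_{i\cdot}$ in $O(n \log n)$ total; by Lemma \ref{lem:BiDot}, an analogous sort on $\{y_i\}$ produces all $b_{i\cdot}$. Given these vectors, the middle term $\sum_{i=1}^n a_{i\cdot} b_{i\cdot}$ is just a length-$n$ dot product, computable in $O(n)$. For the last term, Corollary \ref{cor:abdots} gives $a_{\cdot\cdot}$ and $b_{\cdot\cdot}$ as linear combinations of already-computed quantities, so the product $a_{\cdot\cdot} b_{\cdot\cdot}$ costs $O(1)$ additional time.

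The main obstacle is the first term $\sum_{i \neq j} a_{ij} b_{ij}$, since a naive evaluation has $n(n-1)$ summands. Here I invoke Lemma \ref{lem:aijbij}, which rewrites this double sum as
\begin{equation*}
\sum_{i=1}^n \Bigl[ x_i y_i \, \gamma_i(\{1\}) + \gamma_i(\{x_j y_j\}) - x_i \, \gamma_i(\{y_j\}) - y_i \, \gamma_i(\{x_j\}) \Bigr].
\end{equation*}
Each of the four required $\gamma$-vectors, corresponding to the coefficient sequences $\{1\}$, $\{x_j y_j\}$, $\{y_j\}$, and $\{x_j\}$, can be computed in $O(n \log n)$ by Lemma \ref{lem:gamma}. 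Once all $\gamma_i$ values are in hand, the outer sum over $i$ is $O(n)$. Hence the first term is also computed in $O(n \log n)$.

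Combining the three bounds, $\Omega_n$ is assembled with one additional $O(1)$ arithmetic step for the rational coefficients, giving a total cost of $O(n \log n)$. The real technical content lives entirely in Lemma \ref{lem:gamma}, whose proof (based on a modified AVL-tree traversal in the spirit of Knight and Christensen's Kendall-$\tau$ algorithm) is deferred to the appendix; given that lemma, the theorem reduces to a bookkeeping argument combining Lemmas \ref{lem:Omega1}, \ref{lem:AiDot}, \ref{lem:BiDot}, \ref{lem:aijbij}, and Corollary \ref{cor:abdots}.
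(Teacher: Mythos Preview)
Your proposal is correct and matches the paper's own proof essentially step for step: the paper likewise invokes Lemma~\ref{lem:Omega1} to split $\Omega_n$ into three terms, handles the second and third via Lemmas~\ref{lem:AiDot}, \ref{lem:BiDot} and Corollary~\ref{cor:abdots} after an initial sort, and reduces the first term to four calls of Lemma~\ref{lem:gamma} through Lemma~\ref{lem:aijbij}. The only cosmetic difference is the order in which the three terms are discussed.
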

\begin{proof}[Proof of Theorem \ref{th:fastAlgo}]
In Lemma \ref{lem:Omega1},
the unbiased statistic has been rewritten as in \eqref{eq:Omega_n}.
For the first term on the right hand side of \eqref{eq:Omega_n},
per Lemmas \ref{lem:aijbij} and \ref{lem:gamma},
there is an O($n \log n$) algorithm to compute it.

For the second term on the right hand side of \eqref{eq:Omega_n},
Note that quantities $\alpha_i^x, \beta_i^x, \alpha_i^y$, and
$\beta_i^y$  that were defined in Lemmas \ref{lem:AiDot} and
\ref{lem:BiDot}, respectively, are partial sums, which can be
computed for all $i$'s with O($n \log n$) algorithms.
The $\log n$ factor is inserted, because one may need to sort
$x_i$'s or $y_i$'s in order to compute for
$\alpha_i^x, \beta_i^x, \alpha_i^y$, and
$\beta_i^y$.
Then by \eqref{eq:aidot} and \eqref{eq:bidot}, all
$a_{i \cdot}$ and $b_{i \cdot}$ can be computed at order O($n \log n$).
Consequently, the second term on the right hand side of \eqref{eq:Omega_n}
can be computed by using an O($n \log n$) algorithm.

For the third term on the right hand side of \eqref{eq:Omega_n},
using \eqref{eq:adotdot} and \eqref{eq:bdotdot}
in Corollary \ref{cor:abdots},
we can easily see that it can be computed via an O($n \log n$) algorithm.
From all the above, the theorem is established.
\end{proof}

For readers' convenience, we present a detailed algorithm description in Appendix, where Algorithm \ref{alg:PartialSum2D} realizes the idea that is described in the proof
of Lemma \ref{lem:gamma};
Algorithm \ref{alg:sub01} is a subroutine that will be called in Algorithm \ref{alg:PartialSum2D}; and
the Algorithm \ref{alg:FaDCor} is the algorithm that can compute for
the distance covariance at O($n \log n$).

\section{Numerical Experiments}
\label{sec:simulations}

In Section \ref{sec:implement}, we describe a MATLAB and C based implementation of the
newly proposed fast algorithm.
This fast algorithm enables us to run some simulations with sample sizes
that were impossible to experiment with before its appearance.
We report some numerical experiments in Section \ref{sec:effectiveness}.
Distance correlation has been found helpful in feature screening.
In Section \ref{sec:screening}, we redo experiments on this regard,
increasing the sample size from $n=200$ to $n=20,000$.
It is observed that the advantage of using the distance correlation is
more evident when the sample size becomes larger.

\subsection{Matlab Implementation}
\label{sec:implement}

The fast algorithm was implemented in MATLAB, with a key step (of dyadic updating) being implemented
in C.
It was then compared against the direct (i.e., slow) implementation.
Table \ref{tab:faDCor01} presents the average running time for the two different implementations in MATLAB with $1,000$ replications at each sample size.
The sample size goes from $32$ ($=2^5$) to $2048$ ($=2^{11}$).
In all these cases, the two methods ended with identical solutions; this validates our fast algorithm.
Note a comparison in MATLAB is not desirable for our fast algorithm.
The direct method calls some MATLAB functions, which achieve the speed of a low-level language implementation, while the implementation of the fast method is not.
In theory, the fast algorithm will compare more favorably if both methods are implemented in a low-level language, such as in C or C++.
\begin{table}[htbp]
\begin{center}
\begin{tabular}{r|rr}
\hline
Sample Size & Direct method & Fast method \\
\hline
32	& 0.0006 (0.0001) & 0.0014 (0.0001) \\
64	& 0.0008 (0.0001) & 0.0024 (0.0002) \\
128	& 0.0019 (0.0004) & 0.0053 (0.0006) \\
256	& 0.0083 (0.0010) & 0.0120 (0.0011) \\
512	& 0.0308 (0.0021) & 0.0272 (0.0018) \\
1024	& 0.1223 (0.0051) & 0.0647 (0.0037) \\
2048	& 0.4675 (0.0172) & 0.1478 (0.0045) \\
\hline
\end{tabular}
\end{center}
\caption{Running times (in seconds)
for the direct method and the fast method for computing the distance correlations.
The values in the parentheses are sample standard errors.
At each sample size, $1,000$ repetitions were run.\label{tab:faDCor01}}
\end{table}
\begin{figure}[htbp]
  \centering
  \includegraphics[width = 0.8\linewidth]{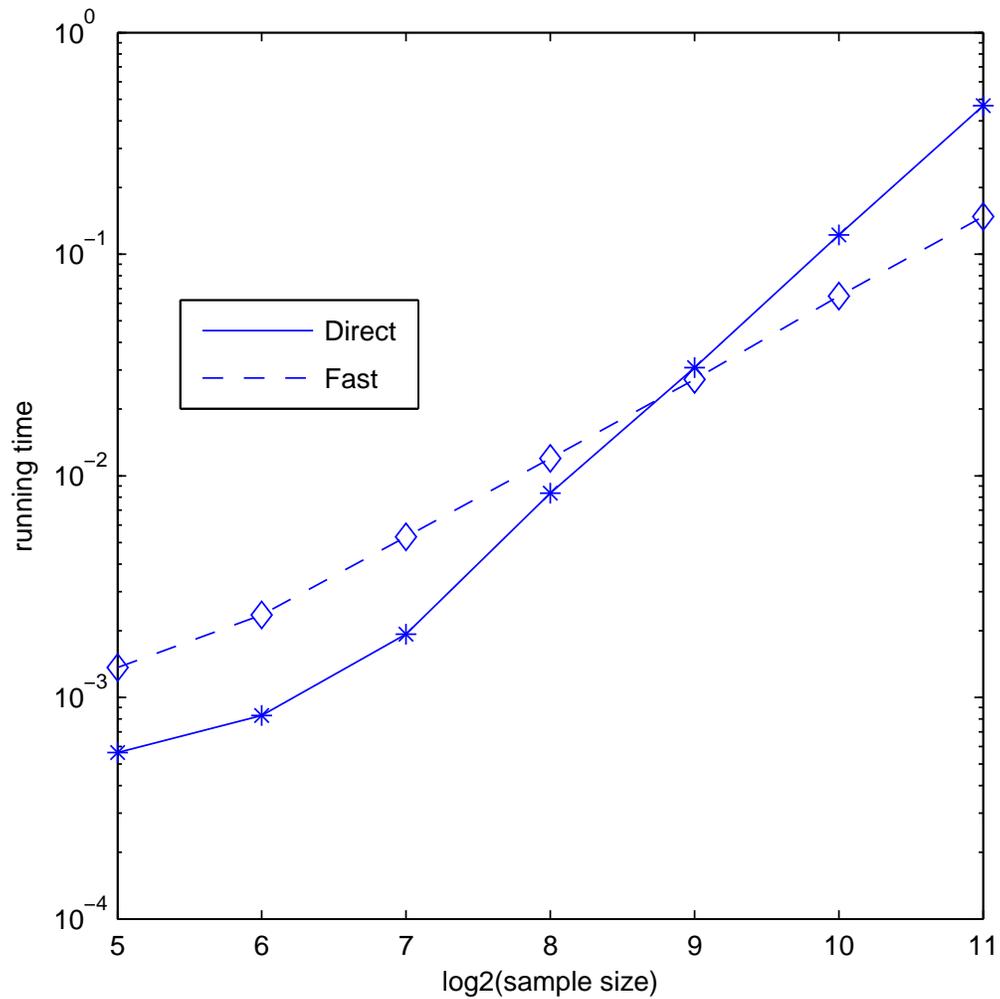}
  \caption{A comparison of running time between the direct method and the fast method for the computation of the distance correlations. }\label{fig:dcor01}
\end{figure}
Fig. \ref{fig:dcor01} provides a visual comparison of the two methods.
All the experiments that are reported in this paper is run on a laptop (Lenovo T520, Intel Core i7-2640M CPU @ 2.80GHz) with allowable 975 MB memory in MATLAB Version 8.2.0.89 (R2013b).

When the sample size is large, e.g., when $n=4096$,
the direct method will generate an ``out-of-memory'' message.
Recall the direct method computes for all pairwise distances, hence it requires O$(n^2)$ memory.
The fast method only requires O$(n)$ in memory.
For illustration purpose, we run the fast algorithm for sample size $n$ going from $4,096$ (which
is $2^{12}$) to $1,048,576$ (which is $2^{20}$).
The running times are reported in Tab. \ref{tab:faDCor02} and Fig. \ref{fig:dcor02}.
When $n=1,048,576$, the running time is a little more than three minutes.
The trend that is observable from Fig. \ref{fig:dcor02} consists with our claim that the
fast method is an O$(n \log n)$ algorithm.
It is evident that the running time scales approximately linearly with the sample size ($n$).
We did not run experiments with larger sample sizes, because their outcomes are predictable
by property of the fast method.
\begin{table}[htbp]
\begin{center}
\begin{tabular}{rr}
\hline
Sample Size &  Fast method \\
\hline
4,096	  &    0.3323 (0.0044) \\
8,192	  &    0.7432 (0.0051) \\
16,384	  &    1.6752 (0.0137) \\
32,768	  &    3.7686 (0.0238) \\
65,536	  &    8.5158 (0.0654) \\
131,072	  &   19.1241 (0.4688) \\
262,144	  &   42.2150 (0.3918) \\
524,288	  &   93.1250 (0.6422) \\
1,048,576 &  204.3403 (1.7328) \\
\hline
\end{tabular}
\end{center}
\caption{Running times (in seconds)
for the fast method for computing the distance correlations, when the sample sizes are large.
The values in the parentheses are sample standard errors.
At each sample size, $100$ repetitions were run.\label{tab:faDCor02}}
\end{table}
\begin{figure}[htbp]
  \centering
  \includegraphics[width = 0.8\linewidth]{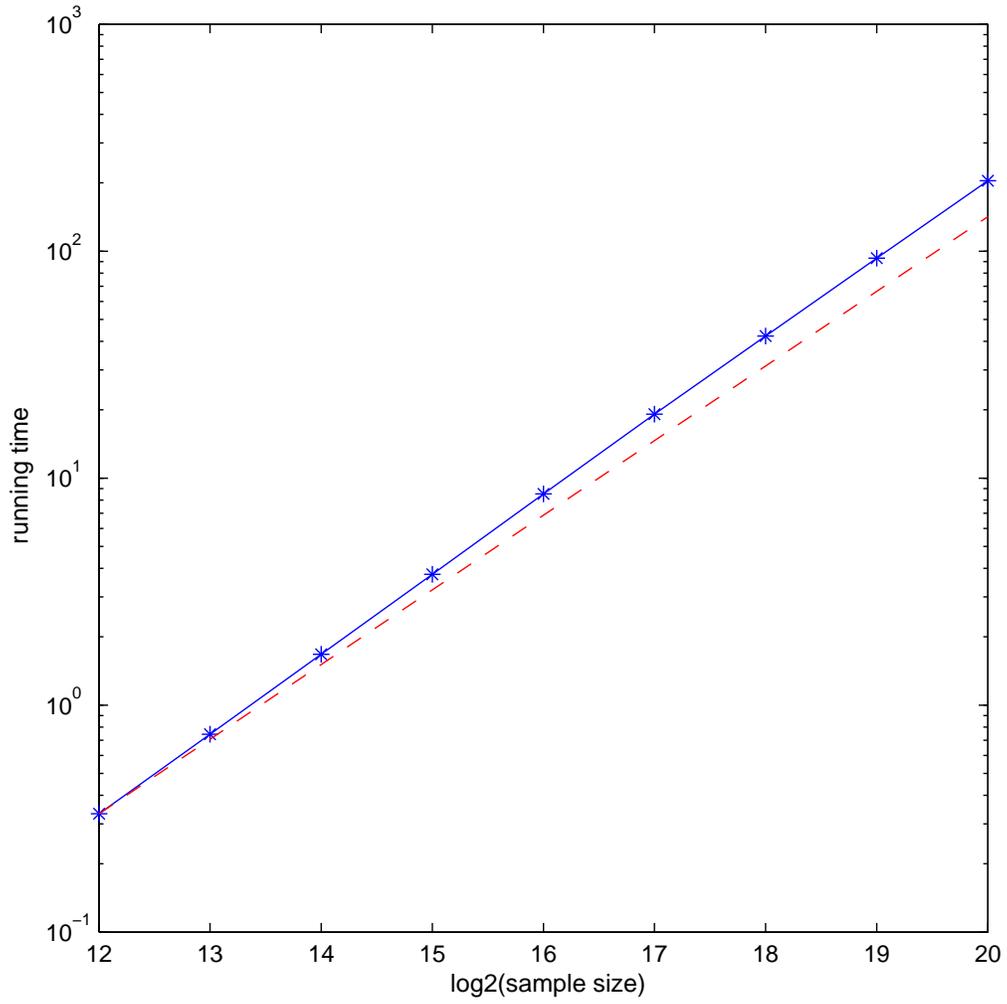}
  \caption{An illustration of running times of the fast method for the computation of the distance correlations.
The dashed line corresponds to an O$(n \log n)$ algorithm. }\label{fig:dcor02}
\end{figure}

\subsection{Measuring Effectiveness of Distance Correlation}
\label{sec:effectiveness}

The distance correlation is zero if and only if the corresponding two random variables are independent of each other.
The Pearson's correlation does not have such a property.
There have been intuitive numerical examples to illustrate such an advantage of using the
distance correlation.
See the Wikipedia page on ``distance correlation."
When the direct implementation of the distance correlation is adopted, the sample size
(which was denoted by $n$) cannot
be large, due to the O$(n^2)$ complexity of the direct method.
In Fig. \ref{fig:showcases01}, we compare the Pearson's correlation with the distance correlation
in nine representative cases:
\begin{enumerate}
\item[(1)] $(X,Y)$ is a bivariate normal with moderate correlation;
\item[(2)] a bivariate normal with a correlation close to $1$;
\item[(3)] a thickened rippled curve;
\item[(4)] a rotation of a uniformly filled rectangle;
\item[(5)] a further rotation of the aforementioned uniformly filled rectangle;
\item[(6)] a thickened quadratic curve;
\item[(7)] bifurcated quadratic curves;
\item[(8)] a thickened circle; and
\item[(9)] a bivariate mixed normal with independent coordinates.
\end{enumerate}
When the sample sizes are $40$ and $400$, respectively, Fig. \ref{fig:showcases01}
presents the Pearson's correlation and the distance correlation in all cases.
In the cases (3) through (8), we seemingly observe the trend that the Pearson's correlations are getting
close to zero, while the distance correlations are not.
However, the significance of such a pattern is not evident.
\begin{figure}[htbp]
  \centering
  \begin{tabular}{c}
  (a) Sample size: $n=40$ \\
  \includegraphics[scale=1.0]{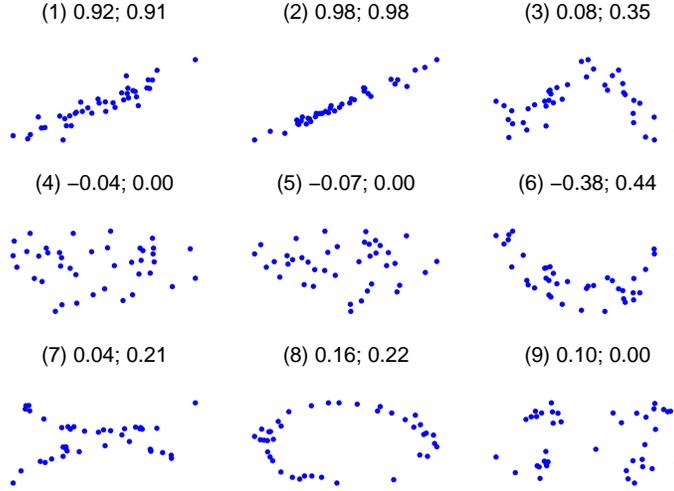} \\
  (b) $n=400$ \\
  \includegraphics[scale=1.0]{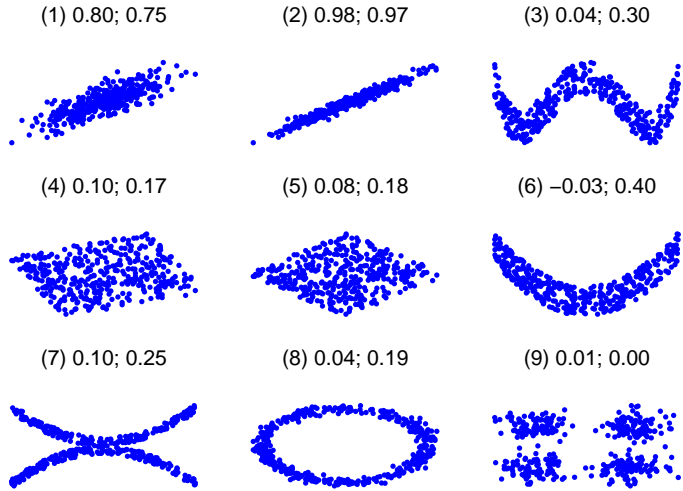}
  \end{tabular}
  \caption{Comparison of the Pearson's correlation and the distance correlation in nine cases.
  In each sub-figure, the two coordinates correspond to the random variables $(X,Y)$.
  Each dot is a sample point. In the title, the first value is the Pearson's correlation, and the second one is the corresponding distance correlation. }\label{fig:showcases01}
\end{figure}

With the fast method, we now can run the same experiments with larger sample sizes.
In Fig. \ref{fig:showcases02}, we run the comparison with sample size $n=10,000$.
\begin{figure}[htbp]
  \centering
  $n=10,000$ \\
  \includegraphics[scale=1.0]{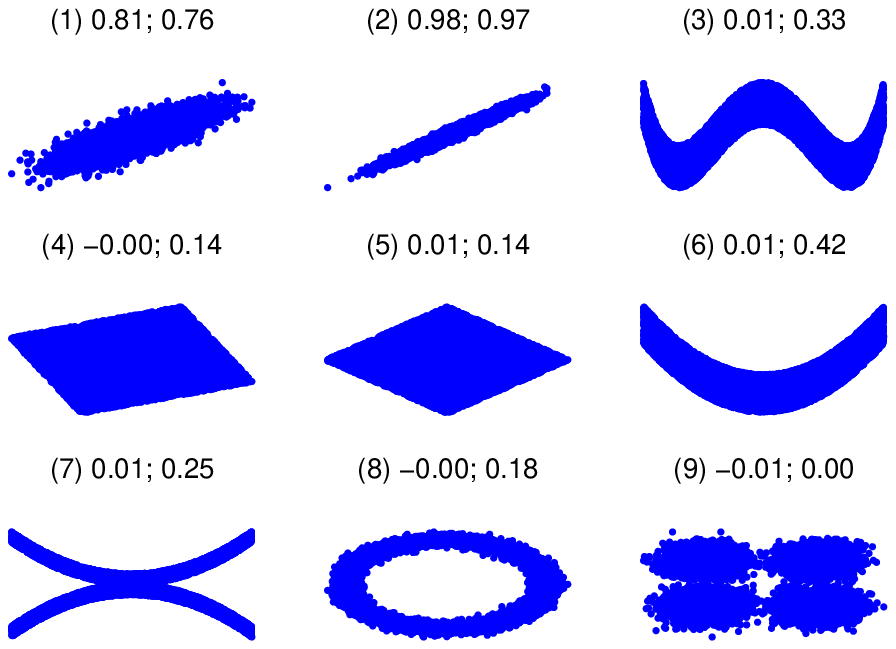}
  \caption{Comparison of the Pearson's correlation and the distance correlation when the sample size ($n$) is large: $n=10,000$. Each dot is a realization of a pair of random variables ($X,Y$).
  In each case, the first value is the Pearson's correlation, and the second one is the corresponding distance correlation.
  We can clearly observe that in the cases (3) through (8), the Pearson's correlations are close to zero, while the distance correlations are not. In cases of (1), (2), and (9), two correlations are close to each other, as the theory predicts. }\label{fig:showcases02}
\end{figure}
This is a sample size for which the corresponding experiment cannot be done with the direct method.
It is clear that the Pearson's correlation become nearly zero in the cases of (3) through (8),
even though the two random variables are not independent.
The corresponding distance correlations are clearly far from zero.
For readers' convenience, we summarize the results in Table \ref{tab:show03}.
\begin{table}[htbp]
\centering
\begin{tabular}{l|rrr rrr}
Sample Size & (1) &  (2) &  (3) &  (4) &  (5) &  (6) \\
           \hline
n = 40     & 0.92; 0.91 & 0.98; 0.98 & 0.08; 0.35 & -0.04; 0.00 & -0.07; 0.00 & -0.38; 0.44 \\
n = 400    & 0.80; 0.75 & 0.98; 0.97 & 0.04; 0.30 & 0.10; 0.17  & 0.08; 0.18  & -0.03; 0.40 \\
n = 10,000 & 0.81; 0.76 & 0.98; 0.97 & 0.01; 0.33 & 0.00; 0.14 & 0.01; 0.14  & 0.01; 0.42  \\
\hline
           &  (7) &  (8) &  (9)&&& \\
n = 40     & 0.04; 0.21 & 0.16; 0.22 & 0.10; 0.00 &&& \\
n = 400    & 0.10; 0.25 & 0.04; 0.19 & 0.01; 0.00 &&& \\
n = 10,000 & 0.01; 0.25 & 0.00; 0.18 & -0.01; 0.00 &&&
\end{tabular}
\caption{Pearson's correlations (left) and distance correlations (right) for the nine cases
 that are studied in Figures \ref{fig:showcases01} and \ref{fig:showcases02}.
It is of particular interests to observe that when $n=10,000$, the distance correlations in cases
(3) through (8) are clearly nonzero, while the Pearson's correlations in these cases converge
to zero.} \label{tab:show03}
\end{table}

The fast method allows us to study how the sample distance correlation converge to the population counterpart as a function of the sample size.
Fig. \ref{fig:converge01} shows the convergence of the sample distance correlation and Pearson's
correlation.
It is worth noting that in cases (3)-(8), the Pearson's correlation quickly converges to zero,
while the sample distance correlation clearly stays away from zero.
This experiments shows that a previous observation in Fig. \ref{fig:showcases02} should occur with
large probability.
\begin{figure}[htbp]
  \centering
  \includegraphics[scale=1.0]{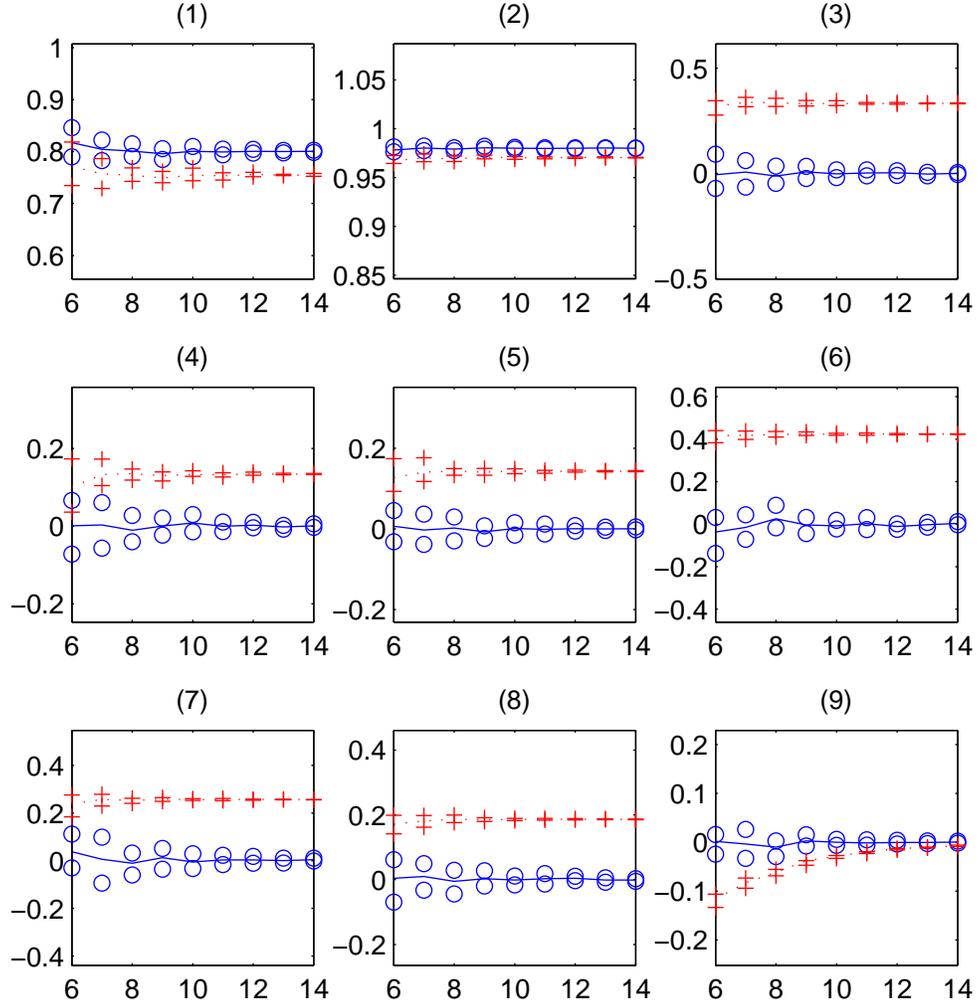}
  \caption{Convergence of $50\%$ covering interval of both sample Pearson's correlation
  (solid line, with low and upper sample quartiles marked by `$\circ$') and sample distance
  correlation (dotted lines, with both quartiles marked by `$+$').
  The horizontal axis equates the $\log_2($sample size$)$.
  The vertical axis corresponds to the values of correlations.
  In cases (3) through (8),
  the two correlations clearly converge to different constants, when the Pearson's correlation
  always seems to converge to zero.  }\label{fig:converge01}
\end{figure}

\subsection{Feature Screening}
\label{sec:screening}

In \citeasnoun**{Li2012}, distance correlation has been proposed to facilitate feature screening in ultrahigh-dimensional data analysis.
The proposed sure independence screening procedure based on the distance correlation (DC-SIS) has been proven to be effective in their simulation study.
Due to the use of the direct method, they restricted their sample size to
$n=200$.
We redo the simulations as in \citeasnoun**{Li2012}, however increases the sample
size to $n=20,000$, i.e., $100$ times of the originally attempted.
It is observed that the use
of distance correlation becomes more advantageous when the sample size increases.

The screening algorithm, which was initially advocated by \citeasnoun{jrssb2008SIS},
works as follows.
For each covariate $X_k, 1 \le k \le n$, a `marginal utility' function was
computed.
Such a marginal utility function can be the Pearson's correlation, the
distance correlation that was discussed in this paper, or other dependence
measure such as the one in \citeasnoun**{jasa2011SIR} that was also used in the
simulation studies of \citeasnoun**{Li2012}.
The `screening' is based on the magnitude of the values of these marginal
utility function.
Sometimes, forward, backward, or a hybrid stepwise approach is proposed.
In this paper, we refrain from further discussion in this potential
research direction.

Our simulation setup follows the one in \citeasnoun[Section 3]{Li2012}.
Note that an alternative approach named
{\it sure independent ranking and screening} (SIRS) \cite{jasa2011SIR}
was compared against.
For a sample, $(x_1,y_1), \ldots, (x_n,y_n)$, of two random variable
$X$ and $Y$, the SIRS dependence measure (i.e., the marginal utility
function) is defined as
\begin{equation}\label{eq:sirs01}
\mbox{SIRS}(X,Y) = \frac{1}{n(n-1)(n-2)}
 \sum_{j=1}^n \left[\sum_{i=1}^n x_i \mathbf{1}(y_i < y_j) \right]^2,
\end{equation}
where $\mathbf{1}(\cdot)$ is an indicator function.
The formulation in the above definition seemingly hint an O$(n^2)$
algorithm.
The following theorem will show that it can be computed via an
O$(n \log n)$ algorithm. The proof and the algorithmic
details are relegated to the appendix.
\begin{theorem}\label{th:sirs}
For a sample, $(x_1,y_1), \ldots, (x_n,y_n)$, of a bivariate random
vector $(X,Y)$, the SIRS measure \cite{jasa2011SIR} in \eqref{eq:sirs01}
can be computed via an algorithm whose average complexity is
O$(n \log n)$.
\end{theorem}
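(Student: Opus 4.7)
The plan is to express $\mbox{SIRS}(X,Y)$ in \eqref{eq:sirs01} in terms of one-sided partial sums that depend only on the ranks of the $y_i$'s, so that a single sort plus a single linear scan suffices. Set $T_j := \sum_{i=1}^n x_i \mathbf{1}(y_i < y_j)$ for $j=1,\ldots,n$, so that $\mbox{SIRS}(X,Y) = \frac{1}{n(n-1)(n-2)} \sum_{j=1}^n T_j^2$. Because the final aggregation $\sum_j T_j^2$ is $O(n)$ once the $T_j$'s are known, the whole task reduces to showing that all $n$ of the $T_j$'s can be produced in $O(n \log n)$ total time.

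First, I would sort the $y_i$'s, obtaining a permutation $\pi$ with $y_{\pi(1)} \le y_{\pi(2)} \le \cdots \le y_{\pi(n)}$; any comparison sort does this in $O(n \log n)$. I then form the prefix sums $S_k := \sum_{\ell=1}^k x_{\pi(\ell)}$ for $k=0,1,\ldots,n$ (with $S_0 = 0$) in an additional $O(n)$ time. The value $S_k$ is the sum of all $x$-values whose associated $y$-value occupies one of the $k$ smallest positions in the sorted order.

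Second, I would handle possible ties among the $y_i$'s. Define $\rho(j) := \min\{k : y_{\pi(k)} = y_j\}$, the earliest position in the sorted order whose $y$-value coincides with $y_j$. Then $\rho(j) - 1 = \#\{i : y_i < y_j\}$ by construction, and consequently $T_j = S_{\rho(j)-1}$. The array $\rho(\cdot)$ can be tabulated by a single left-to-right pass through $\pi$, recording the starting index of each new ``tie block'' whenever a strict increase in $y$-value is detected; this contributes another $O(n)$. Alternatively, a per-query binary search costs $O(\log n)$ and still respects the claimed bound. Plugging the $T_j$'s into the formula and summing yields $\mbox{SIRS}(X,Y)$, for an aggregate runtime of $O(n \log n)$, dominated by the initial sort.

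The only mildly delicate point is the strict inequality $y_i < y_j$ in \eqref{eq:sirs01} in the presence of tied $y$-values, which is exactly why $\rho(j)$ is defined as the leftmost tied position rather than an arbitrary one; once this bookkeeping is in place, correctness is immediate. Notice that, unlike the argument behind Lemma \ref{lem:gamma}, no dyadic updating or AVL-tree machinery is needed here, because each query $T_j$ is governed by a single one-dimensional threshold rather than by a joint comparison in two variables.
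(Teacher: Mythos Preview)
Your proof is correct and is in fact more direct than the route the paper takes. The paper does not compute the inner sums $T_j$ directly; instead it expands the square,
\[
\sum_{j=1}^n T_j^2 = \sum_{i,k} x_i x_k \sum_{j} \mathbf{1}\bigl(\max(y_i,y_k) < y_j\bigr),
\]
splits the $(i,k)$-sum according to whether $y_i \le y_k$ or $y_i > y_k$, and then reduces everything to three families of one-dimensional partial sums $\alpha_k=\sum_j \mathbf{1}(y_k<y_j)$, $\beta_i=\sum_{k:\,y_k\ge y_i} x_k\alpha_k$, and $\gamma_i=\sum_{k:\,y_k<y_i} x_k$, each computable in $O(n\log n)$ after sorting. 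Your approach bypasses this expansion entirely by observing that each $T_j$ is itself already a prefix sum in the $y$-sorted order, so a single sort plus a single prefix-sum pass (with your tie-block bookkeeping for the strict inequality) suffices. Both arguments land at the same $O(n\log n)$ bound, but yours involves one sorted array and one prefix sum rather than three interacting partial-sum sequences, and it makes the role of the sort completely transparent. The paper's decomposition, on the other hand, is closer in spirit to the techniques used elsewhere in the paper (e.g., Lemma~\ref{lem:AiDot}) and would generalize more readily if the quantity being squared were replaced by something not expressible as a single cumulative sum.
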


For completeness, we state our simulation setup below.
we generate $\mathbf{x} = (X_1,X_2, \ldots, X_p)^T$ from
normal distribution with zero mean and covariance matrix
$\Sigma = (\sigma_{ij})_{p \times p}$,
and the error term $\varepsilon$ from the standard normal distribution
$\mathcal{N}(0, 1)$.
Two covariance matrices are considered to assess
the performance of the DC-SIS and to compare with existing
methods: (1) $\sigma_{ij} = 0.8^{|i-j|}$ and (2) $\sigma_{ij} = 0.5^{|i-j|}$.
Note that a covariance matrix with entries $\sigma_{ij} = \rho^{|i-j|}, 0 < \rho < 1,$
enjoys a known Cholesky decomposition: $\Sigma  =R^T R$,
where $R =(r_{ij}) \in \mathbb{R}^{p\times p}, r_{ij}=0, $ if $j<i$, and $r_{1j}=\rho^{j-1}$,
$r_{ij}=c\cdot\rho^{j-i}$, for $i\ge 2$ and $j\ge i$, $c^2 + \rho^2=1$.
In our simulations, we take advantage of this known decomposition.
The dimension $p$ varies from $2000$ to $5000$.
Each experiment was repeated $500$ times, and the performance is evaluated
through the following three criteria:
\begin{enumerate}
\item $\mathcal{S}$: the minimum model size to include all active predictors.
We report the $5\%, 25\%, 50\%, 75\%$, and $95\%$
quantiles of $\mathcal{S}$ out of $500$ replications.

\item $\mathcal{P}_s$: the proportion that an individual active predictor is
selected for a given model size $d$ in the $500$ replications.

\item $\mathcal{P}_a$: the proportion that all active predictors are selected
for a given model size $d$ in the $500$ replications.

\end{enumerate}
The $\mathcal{S}$ is used to measure the model complexity of the resulting
model of an underlying screening procedure.
The closer to the
minimum model size the $\mathcal{S}$ is, the better the screening procedure
is.
The sure screening property ensures that $\mathcal{P}_s$ and $\mathcal{P}_a$
are both
close to one when the estimated model size $d$ is sufficiently large.
Different from  \citeasnoun{Li2012},
the $d$ is chosen to be $d_1 = [p/ 10 \log n]$, $d_2 = 2 d_1$, and $d_3 =
3 d_3$ throughout our simulations to empirically examine
the effect of the cutoff, where $[a]$ denotes the integer part of $a$.

An innovative stopping rule is introduced in \citeasnoun{Kong2012} for DC-SIS.
We did not implement it here, because the new stopping rule requires a multivariate version of the distance correlation, which is not covered
by this paper.

The example is designed to compare the finite
sample performance of the DC-SIS with the SIS \cite{jrssb2008SIS} and the SIRS \cite{jasa2011SIR}.
In this example, we
generate the response from the following four models:
\begin{itemize}
\item[(1.a):] $Y = c_1 \beta_1 X_1 + c_2 \beta_2 X_2 + c_3 \beta_3
 \mathbf{1}(X_{12} < 0) + c_4 \beta_4 X_{22} + \varepsilon$,
\item[(1.b):] $Y = c_1 \beta_1 X_1 X_2 + c_3 \beta_2 \mathbf{1}(X_{12} < 0)
 + c_4 \beta_3 X_{22} + \varepsilon$,
\item[(1.c):] $Y = c_1 \beta_1 X_1 X_2
 + c_3 \beta_2 \mathbf{1}(X_{12} < 0) X_{22} + \varepsilon$,
\item[(1.d):] $Y = c_1 \beta_1 X_1 + c_2 \beta_2 X_2 + c_3 \beta_3
 \mathbf{1}(X_{12} < 0) + \exp(c_4|X_{22}|) \varepsilon$,
\end{itemize}
where $\mathbf{1}(X_{12} < 0)$ is an indicator function.

The regression functions $E(Y | x)$ in models (1.a)-(1.d)
are all nonlinear in $X_{12}$. In addition, models (1.b) and
(1.c) contain an interaction term $X_1 X_2$, and model (1.d)
is heteroscedastic. Following Fan and Lv (2008), we
choose $\beta_j = (-1)^U (a + |Z|)$ for $j = 1, 2, 3$, and $4$, where
$a = 4 \log n/\sqrt{n}$, $U\sim$Bernoulli$(0.4)$ and $Z \sim \mathcal{N}(0, 1)$.
We set
$(c_1, c_2, c_3, c_4) = (2, 0.5, 3, 2)$ in this example to be consistent with the
experiments in \citeasnoun{Li2012}: challenging the
feature screening procedures under consideration. For each
independence screening procedure, we compute the associated
marginal utility between each predictor $X_k$ and the response
$Y$. That is, we regard $\mathbf{x} = (X_1, \ldots, X_p)^T \in
\mathbb{R}^p$ as the predictor
vector in this example.

\begin{table}[htbp]
\centering
{\small
\begin{tabular}{clllll|lllll|lllll}
$\mathcal{S}$ & \multicolumn{5}{c}{SIS} & \multicolumn{5}{c}{SIRS} & \multicolumn{5}{c}{DC-SIS} \\
Model & $5\%$ & $25\%$ & $50\%$ & $75\%$ & $95\%$ & $5\%$ & $25\%$ & $50\%$ & $75\%$ &
$95\%$ & $5\%$ & $25\%$ & $50\%$ & $75\%$ & $95\%$ \\
 \cline{2-6}  \cline{7-11}  \cline{12-16}
 & \multicolumn{15}{c}{Case 1: $p=2000$ and $\sigma_{ij}=0.5^{|i-j|}$}  \\
(1.a) &    4&    4&    6&   10&   22&    4&    5&    6&   10&   20&    4&    5&    6&    9&   20 \\
(1.b) &   76&  551& 1180& 1592& 1918&  237&  814& 1269& 1789& 1959&    4&    6&    8&   11&   14 \\
(1.c) &  591&  922& 1364& 1781& 1941&  342&  827& 1354& 1637& 1930&    6&    6&    6&    8&   11 \\
(1.d) &    8&  237&  726& 1310& 1827&   58&  273&  919& 1444& 1878&    4&    4&    6&    8& 1001 \\
 & \multicolumn{15}{c}{Case 2: $p=2000$ and $\sigma_{ij}=0.8^{|i-j|}$}  \\
(1.a) &    5&    9&   14&   21&   46&    5&    9&   14&   22&   69&    4&    9&   14&   20&   36 \\
(1.b) &   28&   32&   35&  109& 1497&   29&   33&   40&  327& 1658&    4&   10&   15&   20&   26 \\
(1.c) &   39&  222&  711& 1418& 1924&   37&  109&  379& 1053& 1843&   10&   11&   14&   18&   23 \\
(1.d) &   13&   89&  547& 1152& 1823&   22&   77&  338&  863& 1679&    5&    8&   11&   17&  355 \\
 & \multicolumn{15}{c}{Case 3: $p=5000$ and $\sigma_{ij}=0.5^{|i-j|}$}  \\
(1.a) &    4&    5&    6&    9&   19&    4&    5&    6&    9&   20&    4&    5&    6&    9&   19 \\
(1.b) &   59& 1107& 2751& 3920& 4737&  299& 1755& 3255& 4289& 4837&    4&    6&    8&   10&   14 \\
(1.c) &  998& 2652& 3790& 4425& 4898&  321& 1864& 3269& 4303& 4857&    6&    6&    6&    8&   10 \\
(1.d) &   10&  221& 1346& 3055& 4585&   64&  596& 1894& 3500& 4791&    4&    4&    5&    7& 1024 \\
 & \multicolumn{15}{c}{Case 4: $p=5000$ and $\sigma_{ij}=0.8^{|i-j|}$}  \\
(1.a) &    5&   10&   16&   23&   46&    6&   11&   17&   23&   48&    5&   10&   16&   22&   35 \\
(1.b) &   28&   32&   36&  314& 3907&   29&   34&   49&  665& 4449&    5&    9&   14&   20&   27 \\
(1.c) &   45&  447& 1759& 3538& 4818&   41&  235& 1063& 2603& 4762&   10&   11&   14&   18&   23 \\
(1.d) &   14&  216& 1307& 3018& 4695&   23&  115&  747& 2135& 4368&    5&    8&   11&   16&  150
\end{tabular}
}
\caption{ The $5\%, 25\%, 50\%, 75\%$, and $95\%$ quantiles of the
minimum model size $\mathcal{S}$ out of $500$ replications.} \label{tab:ex1a}
\end{table}

\begin{table}[htbp]
\centering
{\footnotesize
\begin{tabular}{cclllll|lllll|lllll}
&& \multicolumn{5}{c}{SIS} & \multicolumn{5}{c}{SIRS} & \multicolumn{5}{c}{DC-SIS} \\
\cline{4-6} \cline{9-11} \cline{14-16}
&& \multicolumn{4}{c}{$\mathcal{P}_s$} & $\mathcal{P}_a$ &
\multicolumn{4}{c}{$\mathcal{P}_s$} & $\mathcal{P}_a$ &
\multicolumn{4}{c}{$\mathcal{P}_s$} & $\mathcal{P}_a$ \\
Model & Size & $X_1$ & $X_2$ & $X_{12}$ & $X_{22}$ & All
& $X_1$ & $X_2$ & $X_{12}$ & $X_{22}$ & All
& $X_1$ & $X_2$ & $X_{12}$ & $X_{22}$ & All \\
\hline
 && \multicolumn{15}{c}{Case 1: $p=2000$ and $\sigma_{ij}=0.5^{|i-j|}$}  \\
(1.a)&$d_1$& 1.00& 0.95& 1.00& 1.00& 0.95& 1.00& 0.95& 1.00& 1.00& 0.95& 1.00& 0.95& 1.00& 1.00& 0.95\\
     &$d_2$& 1.00& 0.96& 1.00& 1.00& 0.96& 1.00& 0.96& 1.00& 1.00& 0.96& 1.00& 0.96& 1.00& 1.00& 0.96\\
     &$d_3$& 1.00& 0.96& 1.00& 1.00& 0.96& 1.00& 0.96& 1.00& 1.00& 0.96& 1.00& 0.97& 1.00& 1.00& 0.97\\
(1.b)&$d_1$& 0.01& 0.02& 1.00& 1.00& 0.01& 0.00& 0.00& 1.00& 1.00& 0.00& 1.00& 1.00& 1.00& 1.00& 1.00\\
     &$d_2$& 0.05& 0.06& 1.00& 1.00& 0.02& 0.01& 0.03& 1.00& 1.00& 0.00& 1.00& 1.00& 1.00& 1.00& 1.00\\
     &$d_3$& 0.07& 0.08& 1.00& 1.00& 0.03& 0.02& 0.04& 1.00& 1.00& 0.00& 1.00& 1.00& 1.00& 1.00& 1.00\\
(1.c)&$d_1$& 0.08& 0.07& 0.00& 1.00& 0.00& 0.00& 0.03& 1.00& 1.00& 0.00& 1.00& 1.00& 1.00& 1.00& 1.00\\
     &$d_2$& 0.11& 0.07& 0.01& 1.00& 0.00& 0.01& 0.03& 1.00& 1.00& 0.00& 1.00& 1.00& 1.00& 1.00& 1.00\\
     &$d_3$& 0.14& 0.11& 0.01& 1.00& 0.00& 0.02& 0.04& 1.00& 1.00& 0.01& 1.00& 1.00& 1.00& 1.00& 1.00\\
(1.d)&$d_1$& 0.60& 0.40& 0.52& 0.52& 0.11& 0.98& 0.93& 1.00& 0.03& 0.03& 0.98& 0.89& 1.00& 1.00& 0.88\\
     &$d_2$& 0.72& 0.42& 0.54& 0.59& 0.12& 0.98& 0.95& 1.00& 0.04& 0.04& 0.98& 0.90& 1.00& 1.00& 0.89\\
     &$d_3$& 0.78& 0.44& 0.63& 0.62& 0.14& 0.98& 0.95& 1.00& 0.05& 0.05& 0.98& 0.91& 1.00& 1.00& 0.89\\
 && \multicolumn{15}{c}{Case 2: $p=2000$ and $\sigma_{ij}=0.8^{|i-j|}$}  \\
(1.a)&$d_1$& 0.92& 0.85& 0.88& 1.00& 0.74& 0.92& 0.84& 0.88& 1.00& 0.73& 0.92& 0.84& 0.91& 1.00& 0.76\\
     &$d_2$& 0.99& 0.97& 0.98& 1.00& 0.95& 0.99& 0.97& 0.98& 1.00& 0.94& 0.99& 0.97& 1.00& 1.00& 0.96\\
     &$d_3$& 0.99& 0.97& 0.99& 1.00& 0.95& 0.99& 0.97& 0.99& 1.00& 0.95& 0.99& 0.97& 1.00& 1.00& 0.96\\
(1.b)&$d_1$& 0.00& 0.01& 0.97& 1.00& 0.00& 0.00& 0.00& 0.97& 1.00& 0.00& 0.82& 0.84& 0.96& 1.00& 0.77\\
     &$d_2$& 0.64& 0.74& 0.99& 1.00& 0.64& 0.52& 0.65& 0.98& 1.00& 0.50& 1.00& 1.00& 1.00& 1.00& 1.00\\
     &$d_3$& 0.71& 0.79& 0.99& 1.00& 0.70& 0.62& 0.71& 0.99& 1.00& 0.61& 1.00& 1.00& 1.00& 1.00& 1.00\\
(1.c)&$d_1$& 0.01& 0.01& 0.50& 1.00& 0.00& 0.00& 0.00& 0.99& 1.00& 0.00& 0.94& 0.94& 0.93& 1.00& 0.87\\
     &$d_2$& 0.11& 0.09& 0.91& 1.00& 0.05& 0.11& 0.13& 1.00& 1.00& 0.07& 1.00& 1.00& 1.00& 1.00& 1.00\\
     &$d_3$& 0.15& 0.14& 0.91& 1.00& 0.09& 0.19& 0.25& 1.00& 1.00& 0.16& 1.00& 1.00& 1.00& 1.00& 1.00\\
(1.d)&$d_1$& 0.66& 0.60& 0.49& 0.42& 0.09& 1.00& 0.94& 0.97& 0.02& 0.02& 0.94& 0.86& 0.97& 1.00& 0.82\\
     &$d_2$& 0.72& 0.64& 0.55& 0.49& 0.16& 1.00& 0.96& 0.97& 0.18& 0.17& 0.99& 0.95& 0.98& 1.00& 0.91\\
     &$d_3$& 0.76& 0.68& 0.59& 0.54& 0.20& 1.00& 0.96& 0.98& 0.22& 0.20& 0.99& 0.95& 0.99& 1.00& 0.93
\end{tabular}
}
\caption{The proportions of $\mathcal{P}_s$ and $\mathcal{P}_a$ in our experiment for the first
two cases.
The user-specified model sizes are $d_1 = [p/ 10\log n ]$, $d_2 = 2 d_1$, and $d_3 = 3 d_1$.} \label{tab:ex1b}
\end{table}
\begin{table}[htbp]
\centering
{\footnotesize
\begin{tabular}{cclllll|lllll|lllll}
&& \multicolumn{5}{c}{SIS} & \multicolumn{5}{c}{SIRS} & \multicolumn{5}{c}{DC-SIS} \\
\cline{4-6} \cline{9-11} \cline{14-16}
&& \multicolumn{4}{c}{$\mathcal{P}_s$} & $\mathcal{P}_a$ &
\multicolumn{4}{c}{$\mathcal{P}_s$} & $\mathcal{P}_a$ &
\multicolumn{4}{c}{$\mathcal{P}_s$} & $\mathcal{P}_a$ \\
Model & Size & $X_1$ & $X_2$ & $X_{12}$ & $X_{22}$ & All
& $X_1$ & $X_2$ & $X_{12}$ & $X_{22}$ & All
& $X_1$ & $X_2$ & $X_{12}$ & $X_{22}$ & All \\
\hline
 && \multicolumn{15}{c}{Case 3: $p=5000$ and $\sigma_{ij}=0.5^{|i-j|}$}  \\
(1.a)&$d_1$& 1.00& 0.96& 1.00& 1.00& 0.96& 1.00& 0.96& 1.00& 1.00& 0.96& 1.00& 0.96& 1.00& 1.00& 0.96\\
     &$d_2$& 1.00& 0.96& 1.00& 1.00& 0.96& 1.00& 0.96& 1.00& 1.00& 0.96& 1.00& 0.96& 1.00& 1.00& 0.96\\
     &$d_3$& 1.00& 0.97& 1.00& 1.00& 0.96& 1.00& 0.97& 1.00& 1.00& 0.96& 1.00& 0.96& 1.00& 1.00& 0.96\\
(1.b)&$d_1$& 0.09& 0.09& 1.00& 1.00& 0.05& 0.04& 0.03& 1.00& 1.00& 0.01& 1.00& 1.00& 1.00& 1.00& 1.00\\
     &$d_2$& 0.13& 0.13& 1.00& 1.00& 0.06& 0.05& 0.05& 1.00& 1.00& 0.01& 1.00& 1.00& 1.00& 1.00& 1.00\\
     &$d_3$& 0.15& 0.15& 1.00& 1.00& 0.08& 0.07& 0.07& 1.00& 1.00& 0.02& 1.00& 1.00& 1.00& 1.00& 1.00\\
(1.c)&$d_1$& 0.09& 0.10& 0.01& 1.00& 0.00& 0.03& 0.03& 0.99& 1.00& 0.01& 1.00& 1.00& 0.99& 1.00& 0.99\\
     &$d_2$& 0.12& 0.14& 0.03& 1.00& 0.00& 0.06& 0.05& 1.00& 1.00& 0.02& 1.00& 1.00& 1.00& 1.00& 1.00\\
     &$d_3$& 0.14& 0.15& 0.04& 1.00& 0.00& 0.06& 0.07& 1.00& 1.00& 0.02& 1.00& 1.00& 1.00& 1.00& 1.00\\
(1.d)&$d_1$& 0.77& 0.47& 0.52& 0.53& 0.14& 1.00& 0.95& 1.00& 0.04& 0.04& 1.00& 0.93& 1.00& 1.00& 0.92\\
     &$d_2$& 0.80& 0.54& 0.58& 0.57& 0.19& 1.00& 0.95& 1.00& 0.08& 0.07& 1.00& 0.93& 1.00& 1.00& 0.93\\
     &$d_3$& 0.82& 0.57& 0.60& 0.60& 0.21& 1.00& 0.96& 1.00& 0.10& 0.10& 1.00& 0.93& 1.00& 1.00& 0.93\\
 && \multicolumn{15}{c}{Case 4: $p=5000$ and $\sigma_{ij}=0.8^{|i-j|}$}  \\
(1.a)&$d_1$& 0.99& 0.98& 0.98& 1.00& 0.95& 0.99& 0.98& 0.98& 1.00& 0.95& 0.99& 0.98& 1.00& 1.00& 0.97\\
     &$d_2$& 0.99& 0.98& 0.99& 1.00& 0.96& 0.99& 0.99& 0.99& 1.00& 0.96& 0.99& 0.99& 1.00& 1.00& 0.98\\
     &$d_3$& 0.99& 0.99& 0.99& 1.00& 0.97& 0.99& 0.99& 0.99& 1.00& 0.97& 0.99& 0.99& 1.00& 1.00& 0.98\\
(1.b)&$d_1$& 0.62& 0.71& 0.98& 1.00& 0.61& 0.52& 0.63& 0.99& 1.00& 0.51& 1.00& 1.00& 1.00& 1.00& 1.00\\
     &$d_2$& 0.68& 0.76& 0.99& 1.00& 0.67& 0.60& 0.70& 0.99& 1.00& 0.60& 1.00& 1.00& 1.00& 1.00& 1.00\\
     &$d_3$& 0.71& 0.78& 0.99& 1.00& 0.70& 0.62& 0.73& 0.99& 1.00& 0.62& 1.00& 1.00& 1.00& 1.00& 1.00\\
(1.c)&$d_1$& 0.11& 0.11& 0.91& 1.00& 0.06& 0.11& 0.13& 1.00& 1.00& 0.08& 1.00& 1.00& 1.00& 1.00& 1.00\\
     &$d_2$& 0.16& 0.16& 0.93& 1.00& 0.10& 0.20& 0.24& 1.00& 1.00& 0.16& 1.00& 1.00& 1.00& 1.00& 1.00\\
     &$d_3$& 0.20& 0.21& 0.94& 1.00& 0.13& 0.22& 0.30& 1.00& 1.00& 0.19& 1.00& 1.00& 1.00& 1.00& 1.00\\
(1.d)&$d_1$& 0.70& 0.61& 0.50& 0.49& 0.14& 0.98& 0.98& 0.99& 0.18& 0.17& 0.97& 0.96& 1.00& 1.00& 0.94\\
     &$d_2$& 0.73& 0.65& 0.55& 0.53& 0.18& 0.99& 0.98& 0.99& 0.25& 0.24& 0.98& 0.97& 1.00& 1.00& 0.95\\
     &$d_3$& 0.76& 0.67& 0.57& 0.57& 0.21& 0.99& 0.98& 0.99& 0.30& 0.28& 0.98& 0.97& 1.00& 1.00& 0.95
\end{tabular}
}
\caption{The proportions of $\mathcal{P}_s$ and $\mathcal{P}_a$ in our example. This is for the remaining
two cases.
The user-specified model sizes are $d_1 = [p/ 10\log n ]$, $d_2 = 2 d_1$, and $d_3 = 3 d_1$.} \label{tab:ex1b-2}
\end{table}
Tables \ref{tab:ex1a}, \ref{tab:ex1b}, and \ref{tab:ex1b-2} present the simulation results for
$\mathcal{S}, \mathcal{P}_s$, and
$\mathcal{P}_a$.
The performances of the DC-SIS, SIS, and SIRS are quite
similar in model (1.a), indicating that the SIS has a robust performance
if the working linear model does not deviate far from
the underlying true model. The DC-SIS outperforms the SIS and
the SIRS significantly in models (1.b)-(1.d). Both the SIS and
the SIRS have little chance to identify the important predictors
$X_1$ and $X_2$ in models (1.b) and (1.c), and $X_{22}$ in model (1.d).

Comparing Tab.s \ref{tab:ex1a} and \ref{tab:ex1b} with the counterparts in \citeasnoun{Li2012},
one can clearly see that the advantage of using the distance correlation becomes more evident,
observing smaller sample quantiles of $\mathcal{S}$ for DC-SIS, and larger coverage probabilities in $\mathcal{P}_s$
and $\mathcal{P}_a$.

\section{Conclusion}
\label{sec:conclude}

Distance correlation has been found useful in many applications \cite{Kong2012,Li2012}.
A direct implementation of the distance correlation led to an $O(n^2)$ algorithm with sample size $n$.
We propose a fast algorithm.
Its computational complexity is $O(n \log n)$ on average.
Armed with this fast algorithm, we carry out some numerical experiments with sample sizes that have not
been attempted before.
We found that in many cases, the advantage of adopting the distance correlation becomes even more evident.
The proposed fast algorithm certainly makes the distance correlation more applicable in situations
where statistical dependence needs to be evaluated.

\appendix

\makeatletter   
 \renewcommand{\@seccntformat}[1]{APPENDIX~{\csname the#1\endcsname}.\hspace*{1em}}
 \makeatother

\section{Algorithms}

Algorithm \ref{alg:FaDCor} is the algorithm that can compute for
the distance covariance at O($n \log n$).
Algorithm \ref{alg:PartialSum2D} realizes the idea that is described in the proof
of Lemma \ref{lem:gamma}.
Algorithm \ref{alg:sub01} is a subroutine that will be called in Algorithm \ref{alg:PartialSum2D}.

\begin{algorithm}[htbp]\label{alg:FaDCor}
\begin{center}
Algorithm: Fast Computing for Distance Covariance (FaDCor)
\end{center}
{\bf Inputs:} Observations $x_1,\ldots,x_n$, and $y_1,\ldots,y_n$. \\
{\bf Outputs:} The distance covariance that was defined in \eqref{eq:Omega_n}.
\begin{enumerate}
\item Sort $x_1,\ldots,x_n$, and $y_1,\ldots,y_n$.
 Let $I^x$ and $I^y$ denote the order indices; i.e., if for $i, 1\le i \le n$, $I^x(i)=k$,
 then $x_i$ is the $k$th smallest observations among $x_1,\ldots,x_n$. \\
 Similarly if for $i, 1\le i \le n$, $I^y(i)=k$,
 then $y_i$ is the $k$th smallest observations among $y_1,\ldots,y_n$.

\item Let $x_{(1)} < \cdots < x_{(n)}$, and $y_{(1)} < \cdots < y_{(n)}$ denote the order statistics. \\
 Denote the partial sums:
 $$
 s^x(i) =\sum_{j=1}^i x_{(j)}, \quad s^y(i) =\sum_{j=1}^i y_{(j)}, \quad i=1,\ldots, n.
 $$
 They can be computed using the following recursive relation:
 $s^x(1)=x_{(1)}, s^y(1)=y_{(1)}$,
 $$
 s^x(i+1) =s^x(i) + x_{(i+1)}, \quad s^y(i+1) =s^y(i) + y_{(i+1)}, \mbox{ for } i=1,\ldots,n-1.
 $$

\item Compute $\alpha_i^x$, $\alpha_i^y$,
  $\beta_i^x$, and $\beta_i^y$ that are defined in Lemma \ref{lem:AiDot}
  and \ref{lem:BiDot}, using the following formula: for $i=1,\ldots,n$,
  we have
  \begin{eqnarray*}
  \alpha_i^x = I^x(i)-1, && \alpha_i^y = I^y(i)-1, \\
  \beta_i^x = s^x(I^x(i)-1), && \beta_i^y = s^y(I^y(i)-1).
  \end{eqnarray*}

\item Compute $x_\cdot$ and $y_\cdot$ per their definitions in Lemma \ref{lem:AiDot}
  and \ref{lem:BiDot}.

\item Using \eqref{eq:aidot} and \eqref{eq:bidot},
  compute $\sum_{i-1}^n a_{i\cdot}b_{i\cdot}$. \label{alg:1.part2}

\item Using \eqref{eq:adotdot} and \eqref{eq:bdotdot},
  compute $a_{\cdot\cdot}$ and $b_{\cdot\cdot}$. \label{alg:1.part3}

\item Use Algorithm {\em PartialSum2D} to compute for $\gamma_i(\{1\})$, $\gamma_i(\{x_j y_j\})$, $\gamma_i(\{y_j\})$, and $\gamma_i(\{x_j\})$.

\item Using \eqref{eq:sumab} to compute $\sum_{i \neq j} a_{ij}b_{ij}$. \label{alg:1.part1}

\item Finally, apply the results of steps \ref{alg:1.part2}., \ref{alg:1.part3}., and \ref{alg:1.part1}.
  to \eqref{eq:Omega_n}.

\end{enumerate}
\caption{The O($n \log n$) algorithm to compute for the distance covariances.}
\end{algorithm}

\begin{algorithm}[htbp]\label{alg:PartialSum2D}
\begin{center}
Algorithm: Fast Algorithm for a $2$-D Partial Sum Sequence (PartialSum2D)
\end{center}
{\bf Inputs:} Observations $x_1,\ldots,x_n$, $y_1,\ldots,y_n$,
and $c_1,\ldots,c_n$. \\
{\bf Outputs:} Quantity $\gamma_i(\{c_j\}) = \sum_{j: j \neq i} c_j S_{ij}$
that is defined in Lemma \ref{lem:aijbij}.
\begin{enumerate}
\item Compute for the order statistics $x_{(1)} <\cdots <x_{(n)}$ for $x_1,\ldots,x_n$.
  Then rearrange triplets $(x_i,y_i,c_j)$'s such that we have $x_1 <\cdots <x_n$.
  Each triplet $(x_i,y_i,c_j)$ ($1 \le i \le n$) stay unchanged.

\item Let $y_{(1)} <\cdots <y_{(n)}$ denote the order statistics for $y_1,\ldots,y_n$, and
  assume that $I^y(i), i=1,2,\ldots,n$, are the order indices; i.e., if $I^y(i)=k$, then $y_i$ is the
  $k$-th smallest among $y_1,\ldots,y_n$.
  Without loss of generality, we may assume that $y_i = I^y(i)$.

\item Evidently aforementioned function $I^y(i)$ is invertible. Let $(I^y)^{-1}(j)$ denote its inverse.
  Define the partial sum sequence: for $1 \le i \le n$,
  $$
  s^y(i) = \sum_{j=1}^i c_{(I^y)^{-1}(j)}.
  $$
  The following recursive relation enables an O($n$) algorithm to compute for all $s^y(i)$'s,
  $$
  s^y(1) = c_{(I^y)^{-1}(1)},\quad s^y(i+1) = s^y(i) + c_{(I^y)^{-1}(i+1)}, \mbox{ for } i \ge 1.
  $$

\item For $1 \le i \le n$, define
  $$
  s^x(i) = \sum_{j=1}^i c_j.
  $$
  Again the above partial sums can be computed in O($n$) steps.

\item Compute $c_\cdot = \sum_{j=1}^n c_j.$

\item Call Subroutine {\em DyadUpdate} to compute for $\sum_{j: j<i, y_j < y_i} c_j $
  for all $i, 1 \le i \le n$.

\item By \eqref{eq:gamma}, we have that
  $$
  \gamma_i(\{c_j\}) = c_\cdot - c_i -2 s^y(i) -2 s^x(i)
  +4 \sum_{j: j < i, y_j < y_i} c_j.
  $$

\end{enumerate}
\caption{A subroutine that will be needed in the fast algorithm for the distance covariance.
This algorithm realizes the ideas in the proof of Lemma \ref{lem:gamma}.}
\end{algorithm}

\begin{algorithm}[htbp]\label{alg:sub01}
\begin{center}
Subroutine: A Dyadic Updating Scheme (DyadUpdate)
\end{center}
{\bf Inputs:} Sequence $y_1,\ldots,y_n$ and $c_1,\ldots,c_n$, where
$y_1,\ldots,y_n$ is a permutation of $\{1,\ldots,n\}$. \\
{\bf Outputs:} Quantities $\gamma_i := \sum_{j: j < i, y_j < y_i} c_j$, $i=1,2,\ldots,n$.
\begin{enumerate}
\item Recall that we have assumed $n=2^L$.
  If $n$ is not dyadic, we simply choose the smallest $L$ such that $n<2^L$.
  Recall that for $\ell = 0,1,\ldots, L-1$, $k=1,2,\ldots, 2^{L-\ell}$, we define a
  close interval
  $$
  I(\ell, k) := [(k-1)\cdot 2^\ell+1, \ldots, k \cdot 2^\ell].
  $$

\item Assign $s(\ell,k) = 0, \forall \ell, k$, and $\gamma_1 = 0$.

\item For $i=2,\ldots, n$, we do the following.
\begin{enumerate}
\item Fall all $(\ell,k)$'s, such that $y_{i-1} \in I(\ell,k)$. Then for these $(\ell,k)$'s, do update
  $$
  s(\ell,k) \leftarrow s(\ell,k) + c_{i-1}.
  $$
\item Find nonnegative integers $\ell_1 > \cdots > \ell_\tau \ge 0$ such that
  $$
  y_i - 1 = 2^{\ell_1} + \cdots + 2^{\ell_\tau}.
  $$
  Let $k_1 =1$. For $j=2,\ldots,\tau$, compute
  $$
  k_j = (2^{\ell_1} + \cdots + 2^{\ell_{j-1}}) \cdot 2^{-\ell_j} + 1.
  $$

\item Compute $\gamma_i = \sum_{j=1}^\tau s(\ell_j, k_j).$
\end{enumerate}
\end{enumerate}
\caption{A subroutine that will be called in Algorithm \ref{alg:PartialSum2D}.}
\end{algorithm}

\section{Proofs}

\begin{proof}[Proof of Lemma \ref{lem:Omega1}]
One can verify the following equalities:
\begin{eqnarray}
&& \sum_{i\neq j} a_{ij}b_{i\cdot} = \sum_{i=1}^n a_{i\cdot}b_{i\cdot},
\sum_{i\neq j} a_{ij}b_{\cdot j} = \sum_{j=1}^n a_{\cdot j}b_{\cdot j},
\sum_{i\neq j} b_{ij}a_{i\cdot} = \sum_{i=1}^n a_{i\cdot}b_{i\cdot},
\sum_{i\neq j} b_{ij}a_{\cdot j} = \sum_{j=1}^n a_{\cdot j}b_{\cdot j};
\label{eq:check04} \\
&& \sum_{i\neq j} a_{i \cdot} = (n-1)a_{\cdot \cdot}, \quad
\sum_{i\neq j} b_{i \cdot} = (n-1)b_{\cdot \cdot};
\label{eq:check05} \\
&& a_{i \cdot} = a_{\cdot i}, \mbox{ and } b_{i \cdot} = b_{\cdot i}.
\label{eq:check06}
\end{eqnarray}
The following will be used in our simplification too.
We have
\begin{eqnarray}
\sum_{i\neq j} a_{i \cdot} b_{\cdot j} &=&
\sum_{i=1}^n a_{i \cdot} \sum_{j: j\neq i} b_{\cdot j} \nonumber \\
&=& \sum_{i=1}^n a_{i \cdot} (b_{\cdot \cdot} - b_{\cdot i}) \nonumber \\
&=& a_{\cdot \cdot} b_{\cdot \cdot} - \sum_{i=1}^n a_{i \cdot} b_{\cdot i} \nonumber \\
&\stackrel{\mbox{\eqref{eq:check06}}}{=}& a_{\cdot \cdot} b_{\cdot \cdot} - \sum_{i=1}^n a_{i \cdot} b_{i \cdot};
\label{eq:check07}
\end{eqnarray}
Similarly, we have
\begin{equation}\label{eq:check08}
\sum_{i\neq j} b_{i \cdot} a_{\cdot j} =
a_{\cdot \cdot} b_{\cdot \cdot} - \left( \sum_{i=1}^n a_{i \cdot} b_{i \cdot} \right).
\end{equation}

In the following, we simplify the statistic in \eqref{eq:tAdottB}.
We have
\begin{eqnarray*}
\Omega_n &\stackrel{\mbox{\eqref{eq:tAdottB}}}{=}&
\frac{1}{n(n-3)} \sum_{i \neq j}
\widetilde{A}_{i,j} \widetilde{B}_{i,j} \\
&\stackrel{\mbox{\eqref{eq:Aij}}}{=}& \frac{1}{n(n-3)} \sum_{i \neq j}
\left(a_{ij} - \frac{a_{i\cdot}}{n-2} -\frac{a_{\cdot j}}{n-2}
+\frac{a_{\cdot \cdot}}{(n-1)(n-2)} \right)\\
&& \qquad \qquad \qquad
\left(b_{ij} - \frac{b_{i\cdot}}{n-2} -\frac{b_{\cdot j}}{n-2}
+\frac{b_{\cdot \cdot}}{(n-1)(n-2)} \right) \\
&=& \frac{1}{n(n-3)} \sum_{i \neq j} \left[a_{ij}b_{ij}
-\frac{a_{ij}(b_{i\cdot} + b_{\cdot j} )}{n-2}
-\frac{b_{ij}(a_{i\cdot} + a_{\cdot j} )}{n-2}
+\frac{(a_{i\cdot} + a_{\cdot j}) (b_{i\cdot} + b_{\cdot j})}{(n-2)^2}\right. \\
&& \qquad \qquad \qquad \left.
+\frac{a_{ij}b_{\cdot \cdot} + b_{ij}a_{\cdot \cdot}}{(n-1)(n-2)}
-\frac{(a_{i \cdot} + a_{\cdot j})b_{\cdot \cdot} + (b_{i \cdot} + b_{\cdot j})a_{\cdot \cdot} }{(n-1)(n-2)^2}
+ \frac{a_{\cdot \cdot} b_{\cdot \cdot}}{(n-1)^2 (n-2)^2} \right].
\end{eqnarray*}
Furthermore, we have
\begin{eqnarray*}
\Omega_n
&\stackrel{\mbox{\eqref{eq:check05}}}{=}&
\frac{1}{n(n-3)} \sum_{i \neq j} a_{ij}b_{ij} \\
&&-\frac{1}{n(n-2)(n-3)} \sum_{i \neq j} \left[a_{ij}(b_{i\cdot} + b_{\cdot j} )
 + b_{ij}(a_{i\cdot} + a_{\cdot j} ) \right] \\
&& + \frac{1}{n(n-2)^2 (n-3)} \sum_{i \neq j} (a_{i\cdot} + a_{\cdot j})(b_{i\cdot} + b_{\cdot j})\\
&& -\frac{a_{\cdot \cdot} b_{\cdot \cdot}}{(n-1)(n-2)^2(n-3)} \\
&\stackrel{\mbox{\eqref{eq:check04},\eqref{eq:check06}}}{=}&
\frac{1}{n(n-3)} \sum_{i \neq j} a_{ij}b_{ij}
-\frac{4}{n(n-2)(n-3)} \sum_{i=1}^n a_{i\cdot}b_{i\cdot} \\
&& + \frac{1}{n(n-2)^2 (n-3)} \sum_{i \neq j} (a_{i\cdot} + a_{\cdot j})(b_{i\cdot} + b_{\cdot j})
 -\frac{a_{\cdot \cdot} b_{\cdot \cdot}}{(n-1)(n-2)^2(n-3)}.
\end{eqnarray*}
Now bringing in \eqref{eq:check07} and \eqref{eq:check08}, we have
\begin{eqnarray*}
\Omega_n &=&
\frac{1}{n(n-3)} \sum_{i \neq j} a_{ij}b_{ij}
-\frac{4}{n(n-2)(n-3)} \sum_{i=1}^n a_{i\cdot}b_{i\cdot}
-\frac{a_{\cdot \cdot} b_{\cdot \cdot}}{(n-1)(n-2)^2(n-3)} \\
&& + \frac{1}{n(n-2)^2 (n-3)} \left[2(n-1)\sum_{i=1}^n a_{i \cdot}b_{i \cdot}
+2\left(a_{\cdot\cdot} b_{\cdot\cdot} - \sum^n_{i=1} a_{i\cdot}b_{i \cdot} \right) \right] \\
&=& \frac{1}{n(n-3)} \sum_{i \neq j} a_{ij}b_{ij}
-\frac{2}{n(n-2)(n-3)} \sum_{i=1}^n a_{i\cdot}b_{i\cdot}
+ \frac{a_{\cdot \cdot} b_{\cdot \cdot}}{n(n-1)(n-2)(n-3)},
\end{eqnarray*}
which is \eqref{eq:Omega_n}.
\end{proof}

\begin{proof}[Proof of Lemma \ref{lem:check02}]
We use arithmetic induction.
Suppose $n=r+1$, \eqref{eq:check01} becomes
$$
(r+1)U_{r+1,r}(x_1,\ldots,x_{r+1}) = \sum_{i=1}^{r+1} U^{-i}_{r+1,r}(x_1,\ldots,x_{r+1}).
$$
By defining $h(x_1,\ldots,x_{i-1},x_{i+1},\ldots,x_{r+1}) = U^{-i}_{r+1,r}(x_1,\ldots,x_{r+1})$, we can verify that $h(\cdot)$ is a kernel function with $r$ variables.
Consequently, $U_{r+1,r}(x_1,\ldots,x_{r+1})$ can be written as \eqref{eq:U-stat}.

Now suppose for any $n \ge n'$, $U_{nr}(x_1,\ldots,x_n)$ has the form as in \eqref{eq:U-stat}, with the function $h(\cdot)$ that was defined above.
Applying \eqref{eq:check01} with $n=n'+1$, we can show that $U_{n'+1,r}(x_1,\ldots,x_{n'+1})$ still has the form as in \eqref{eq:U-stat}, with the same function $h(\cdot)$ that was defined above.
We omit further details.
\end{proof}

\begin{proof}[Proof of Theorem \ref{th:u-stat}]
It is evident to verify that the followings are true: for $i\neq k$,
\begin{eqnarray*}
a_{i\cdot}^{-k} &=& a_{i\cdot} - a_{ik}, \\
b_{i\cdot}^{-k} &=& b_{i\cdot} - b_{ik}, \\
a_{\cdot \cdot}^{-k} &=& a_{\cdot \cdot} - a_{\cdot k} - a_{k \cdot}
 = a_{\cdot \cdot} -2 a_{\cdot k}, \\
b_{\cdot \cdot}^{-k} &=& b_{\cdot \cdot} -2 b_{\cdot k}.
\end{eqnarray*}
We then have
\begin{eqnarray*}
\Omega_{n-1}^{-k} &=&
\frac{\sum_{i \neq j, i\neq k, j\neq k} a_{ij}b_{ij}}{(n-1)(n-4)}
-\frac{2 \sum_{i=1, i\neq k}^n (a_{i\cdot} - a_{ik}) (b_{i\cdot} - b_{ik})}{(n-1)(n-3)(n-4)}
 \nonumber \\
&& + \frac{(a_{\cdot \cdot} -2 a_{\cdot k}) (b_{\cdot \cdot} -2 b_{\cdot k})}{(n-1)(n-2)(n-3)(n-4)}.
\end{eqnarray*}
For the right hand side of \eqref{eq:check03}, we have the following:
\begin{eqnarray*}
\sum_{k=1}^n \Omega_{n-1}^{-k} &=&
\sum_{k=1}^n
\frac{\sum_{i \neq j, i\neq k, j\neq k} a_{ij}b_{ij}}{(n-1)(n-4)}
-\sum_{k=1}^n
\frac{2 \sum_{i=1, i\neq k}^n (a_{i\cdot} - a_{ik}) (b_{i\cdot} - b_{ik})}{(n-1)(n-3)(n-4)} \\
&& + \sum_{k=1}^n
\frac{(a_{\cdot \cdot} -2 a_{\cdot k}) (b_{\cdot \cdot} -2 b_{\cdot k})}{(n-1)(n-2)(n-3)(n-4)} \\
&=& \frac{(n-2) \sum_{i \neq j} a_{ij}b_{ij}}{(n-1)(n-4)}
-\frac{2 \left[(n-3)\sum^n_{i=1} a_{i\cdot} b_{i\cdot}
+ \sum_{i\neq k} a_{ik}b_{ik} \right]}{(n-1)(n-3)(n-4)} \\
&& + \frac{(n-4)a_{\cdot \cdot} b_{\cdot \cdot} +4\sum^n_{k=1} a_{k\cdot} b_{k \cdot}}{(n-1)(n-2)(n-3)(n-4)} \\
&=& \frac{ \sum_{i \neq j} a_{ij}b_{ij}}{n-3}
- \frac{2}{(n-2)(n-3)} \sum^n_{i=1} a_{i\cdot} b_{i \cdot}
+ \frac{a_{\cdot \cdot} b_{\cdot \cdot}}{(n-1)(n-2)(n-3)}.
\end{eqnarray*}
Compare with \eqref{eq:Omega_n}, we can verify that the above equates to
$n\cdot \Omega_n$, which (per Theorem \ref{th:check01}) indicates that $\Omega_n$ is a U-statistic.
The kernel function of the corresponding U-statistic is the inner product that was defined in
\eqref{eq:tAdottB} with $n=4$.
\end{proof}

\begin{proof}[Proof of Lemma \ref{lem:AiDot}]
We have
\begin{eqnarray*}
a_{i\cdot}  &=& \sum^{n}_{\ell=1} a_{i,\ell} = \sum^{n}_{\ell=1} |x_i - x_\ell| \\
&=& \sum_{x_\ell < x_i} (x_i - x_\ell) + \sum_{x_\ell > x_i} (x_\ell - x_i) \\
&=& x_i \left(\sum_{x_\ell < x_i} 1 - \sum_{x_\ell > x_i} 1 \right) - \sum_{x_\ell < x_i} x_\ell
  + \sum_{x_\ell > x_i} x_\ell.
\end{eqnarray*}
It is easy to verify that
$$
\sum_{x_\ell > x_i} 1 = n-1 -\alpha_i^x,
$$
and
$$
\sum_{x_\ell > x_i} x_\ell = x_\cdot - x_i - \beta_i^x.
$$
Taking into account the above two equations, we have
\begin{eqnarray*}
a_{i\cdot}  &=& (2\alpha_i^x - n +1 )x_i -\beta_i^x + x_\cdot - x_i - \beta_i^x \\
&=& x_\cdot + (2\alpha_i^x - n )x_i -2 \beta_i^x,
\end{eqnarray*}
which is \eqref{eq:aidot}.
\end{proof}

\begin{proof}[Proof of Lemma \ref{lem:gamma}]
Without loss of generality (WLOG), we assume that $x_1 < x_2 < \cdots < x_n$.
We have
\begin{eqnarray*}
\gamma_i(\{c_j\}) &=& \sum_{j: j \neq i} c_j S_{ij} \\
&=& \sum_{j: j > i, y_j > y_i} c_j
 + \sum_{j: j < i, y_j < y_i} c_j
 - \sum_{j: j > i, y_j < y_i} c_j
 - \sum_{j: j < i, y_j > y_i} c_j.
\end{eqnarray*}
Note that we can verify the following equations:
\begin{eqnarray*}
\sum_{j: j < i, y_j < y_i} c_j + \sum_{j: j > i, y_j < y_i} c_j
&=& \sum_{j: y_j < y_i} c_j, \\
\sum_{j: j < i, y_j < y_i} c_j + \sum_{j: j < i, y_j > y_i} c_j
&=& \sum_{j: j < i} c_j, \\
\sum_{j: j > i, y_j > y_i} c_j
 + \sum_{j: j < i, y_j < y_i} c_j
 + \sum_{j: j > i, y_j < y_i} c_j
 + \sum_{j: j < i, y_j > y_i} c_j
&=& \sum_{j: j \neq i} c_j = c_\cdot - c_i,
\end{eqnarray*}
where $c_\cdot = \sum_{j=1}^n c_j$.
We can rewrite $\gamma_i(\{c_j\})$ as follows:
\begin{equation}\label{eq:gamma}
\gamma_i(\{c_j\}) = c_\cdot - c_i -2 \sum_{j: y_j < y_i} c_j -2\sum_{j: j < i} c_j
+4\sum_{j: j < i, y_j < y_i} c_j.
\end{equation}
We will argue that the three summations on the right hand side can be implemented by
O$(n \log n)$ algorithms.
First, term $\sum_{j: j < i} c_j$ is a formula for partial sums.
It is known that an O($n$) algorithm exists, by utilizing the relation:
$$
\sum_{j: j < i+1} c_j = c_i + \sum_{j: j < i} c_j .
$$
Second, after sorting $y_j$'s at an increasing order, sums $\sum_{j: y_j < y_i} c_j$ is
transferred into a partial sums sequence.
Hence it can be implemented via an O($n$) algorithm.
If QuickSort \cite{Hoare1961} \cite[Section 5.2.2: Sorting by Exchanging (pages 113-122)]{QuickSort}
is adopted, the sorting of $y_j$'s can be done via an O($n \log n$) algorithm.

We will argue that sums $\sum_{j: j < i, y_j < y_i} c_j, i=1,\ldots,n,$
can be computed in an O($n \log n$) algorithm.
WLOG, we assume that $y_i, i=1,2,\ldots, n$, is a permutation of the set $\{1,2,\ldots, n\}$.
WLOG, we assume that $n$ is dyadic; i.e., $n=2^L$, where $L \in \mathbb{N}$ or $L$ is a nonnegative integer.
For $\ell = 0,1,\ldots, L-1$, $k=1,2,\ldots, 2^{L-\ell}$, we define an close interval
$$
I(\ell, k) := [(k-1)\cdot 2^\ell+1, \ldots, k \cdot 2^\ell].
$$
We then define the following function
$$
s(i,\ell,k) := \sum_{j: j<i, y_j \in I(\ell, k)} c_j,
$$
where $i=1,\ldots,n$, $\ell = 0,1,\ldots, L-1$, and $k=1,2,\ldots, 2^{L-\ell}$.

We argue that computing the values of $s(i,\ell,k)$ for all $i,\ell,k$, can be done in O($n \log n$).
First of all, it is evident that for all $\ell,k$, we have
$$
s(1,\ell,k) \equiv 0.
$$
Suppose for all $i' \le i$, $s(i',\ell,k)$'s have been computed for all $\ell$ and $k$.
For each $0 \le \ell \le L-1 < \log_2 n$, there is only one $k^\ast$, such that $y_i \in I[\ell, k^\ast]$.
By the definition of $s(\cdot, \cdot, \cdot)$, we have
\begin{eqnarray*}
s(i+1,\ell,k) = \left\{\begin{array}{ll}
s(i,\ell,k) + c_i, & \mbox{ if } k = k^\ast, \\
s(i,\ell,k), & \mbox{ otherwise. }
\end{array}
 \right.
\end{eqnarray*}
The above dynamic programming style updating scheme needs to be run for $n$ times (i.e., for all $1 \le i
\le n$), however each stage requires no more than $\log_2 n$ updates.
Overall, the computing for all $s(i,\ell,k)$ takes no more than O($n \log n$).

For a fixed $i$, $1 \le i \le n$, we now consider how to compute for $\sum_{j: j < i, y_j < y_i} c_j$.
If $y_i=1$, obviously we have $\sum_{j: j < i, y_j < y_i} c_j = 0$.
For $y_i > 1$, there must be a unique sequence of positive integers $\ell_1 > \ell_2 > \cdots
\> \ell_\tau > 0$, such that
$$
y_i -1 = 2^{\ell_1} + 2^{\ell_2} + \cdots + 2^{\ell_\tau}.
$$
Since $y_i \le n$, we must have $\tau \le \log_2 n$.
We then define $k_\alpha, \alpha=1,\ldots, \tau$ as follows
\begin{eqnarray*}
k_1 &=& 1, \\
k_2 &=& 2^{\ell_1 - \ell_2} + 1, \\
&\vdots& \\
k_\alpha &=& (2^{\ell_1} + \cdots + 2^{\ell_{\alpha-1}})/2^{\ell_\alpha} + 1, \\
&\vdots& \\
k_\tau &=& (2^{\ell_1} + \cdots + 2^{\ell_{\tau-1}})/2^{\ell_\tau} + 1.
\end{eqnarray*}
One can then verify the following: for $2 \le i \le n$,
$$
\sum_{j: j < i, y_j < y_i} c_j = \sum_{\alpha=1}^\tau s(i, \ell_\alpha, k_\alpha).
$$
Since $\tau \le \log_2 n$, the above takes no more than O($\log n$) numerical operations.
Consequently, computing $\sum_{j: j < i, y_j < y_i} c_j$ for all $i, 1 \le i \le n$, can
be done in O($n \log n$).
(We realized that the above approach utilized the AVL tree structure \cite{AVL1962}.)
From all the above, we established the result.
\end{proof}

\begin{proof}[Proof of Theorem \ref{th:sirs}]
We have the following sequence of equations:
\begin{eqnarray*}
\sum_{j=1}^n \left[\sum_{i=1}^n x_i \mathbf{1}(y_i < y_j) \right]^2
&=& \sum_{j=1}^n \left[\sum_{i=1}^n x_i \mathbf{1}(y_i < y_j) \right] \cdot
 \left[\sum_{k=1}^n x_k \mathbf{1}(y_k < y_j) \right] \\
&=& \sum_{j=1}^n \sum_{i=1}^n \sum_{k=1}^n x_i \cdot x_k \cdot \mathbf{1}(y_i < y_j \mbox{ and } y_k < y_j)\\
&=& \sum_{i=1}^n x_i \left[ \sum_{k: y_i \le y_k} x_k \sum_{j=1}^n \mathbf{1}(y_k < y_j)
 + \sum_{k: y_i > y_k} x_k \sum_{j=1}^n \mathbf{1}(y_i < y_j) \right] .
\end{eqnarray*}
The last expression implies the following steps to compute for SIRS$(X,Y)$.
\begin{enumerate}
\item For $k=1,\ldots,n$, compute $\alpha_k := \sum_{j=1}^n \mathbf{1}(y_k < y_j)$;

\item For $i=1,\ldots,n$, compute $\beta_i := \sum_{k: y_k \ge y_i} x_k \alpha_k$;

\item For $i=1,\ldots,n$, compute $\gamma_i := \sum_{k: y_k < y_i} x_k $;

\item Compute
$$
\mbox{SIRS}(X,Y) = \frac{\sum_{i=1}^n x_i (\beta_i + \gamma_i \alpha_i)}{n(n-1)(n-2)}.
$$
\end{enumerate}
Since $\alpha_i$'s, $\beta_i$'s, and $\gamma_i$'s are partial sums,
it is easy to verify that each of the above steps can be done within O$(n \log n)$ operations on average, hence the entire algorithm takes O$(n \log n)$ operations on average.
\end{proof}



\end{document}